\documentclass[journal,11pt,twocolumn,romanappendices]{IEEEtran}

\ifCLASSINFOpdf
   \usepackage[pdftex]{graphicx}

\else

\fi

\usepackage{amssymb,bbm}

\usepackage{amsmath}

\usepackage{color}

\def \ver{techrep}
\def \verreview{review}
\def \vertechrep{techrep}

\newcommand{\changes}[1]{%
	\ifx\ver\verreview%
		{\textcolor{blue}{#1}}%
	\else {#1}%
	\fi%
}

\newcommand{\citeapp}[1]{%
	\ifx\ver\vertechrep%
		{Appendix \ref{#1}}%
	\else {\cite[Appendix \ref{#1}]{implicit_arxiv}}%
	\fi%
}

\ifx\ver\vertechrep%
	\def\myFigureScale{0.8}%
\else \def\myFigureScale{0.6}%
\fi%

\newcommand{\Pa}{\mathcal P}
\newcommand{\x}{\mathbf x}

\newcommand{\X}{\mathbf X}
\newcommand{\y}{\mathbf y}
\newcommand{\z}{\mathbf z}
\newcommand{\vgamma}{\boldsymbol \gamma}

\newcommand{\E}{{\mathbb E}}

\newcommand{\lru}{\textproc{LRU}}
\newcommand{\lfu}{\textproc{LFU}}
\newcommand{\lruone}{\textproc{LRU-One}}

\newcommand{\lruall}{\textproc{LRU-All}}

\newcommand{\qlru}{\textproc{$q$LRU}}
\newcommand{\qlrulazy}{\textproc{$q$LRU-Lazy}}
\newcommand{\fifo}{\textproc{FIFO}}
\newcommand{\random}{\textproc{RANDOM}}
\newcommand{\twolru}{\textproc{2LRU}}

\newcommand{\qlrud}{\mbox{\textproc{$q$LRU-$\Delta$}}}
\newcommand{\qlrudd}{\textproc{$q$LRU-$\Delta d$}}
\newcommand{\qlrudh}{\textproc{$q$LRU-$\Delta h$}}
\newcommand{\greedyd}{\textproc{Greedy-$d$}}
\newcommand{\greedyh}{\textproc{Greedy-$h$}}

\usepackage{amsthm}

\theoremstyle{plain}
\newtheorem{thm}{Theorem}[section]
\newtheorem{lem}[thm]{Lemma}
\newtheorem{prop}[thm]{Proposition}
\newtheorem{cor}{Corollary}

\makeatletter
\newtheorem*{rep@prop}{\rep@title}
\newcommand{\newrepprop}[2]{%
\newenvironment{rep#1}[1]{%
 \def\rep@title{#2 \ref{##1}}%
 \begin{rep@prop}}%
 {\end{rep@prop}}}
\makeatother

\newrepprop{prop}{Proposition}

\theoremstyle{definition}
\newtheorem{defn}{Definition}[section]

\theoremstyle{remark}

\usepackage{algpseudocode,algorithm,algorithmicx}

\usepackage{url}
\usepackage{hyperref}

\ifCLASSOPTIONcompsoc
  \usepackage[caption=false,font=normalsize,labelfont=sf,textfont=sf]{subfig}
\else
  \usepackage[caption=false,font=footnotesize]{subfig}
\fi

\hyphenation{op-tical net-works semi-conduc-tor}

\DeclareMathOperator*{\maxim}{maximize}


\usepackage{epstopdf}
\usepackage{verbatim}
\usepackage{algpseudocode}

\begin{document}

\title{A Swiss Army Knife for Caching \\in Small Cell Networks}

\author{
\IEEEauthorblockN{Giovanni Neglia\IEEEauthorrefmark{1}, Emilio Leonardi\IEEEauthorrefmark{3}, Guilherme Ricardo\IEEEauthorrefmark{2}\IEEEauthorrefmark{1}, and Thrasyvoulos Spyropoulos\IEEEauthorrefmark{2}\\
}
\IEEEauthorblockA{\IEEEauthorrefmark{1}Inria, Universit\'e C\^ote d'Azur, France, giovanni.neglia@inria.fr}\\
\IEEEauthorblockA{\IEEEauthorrefmark{3}Politecnico di Torino, Italy, leonardi@polito.it}\\
\IEEEauthorblockA{\IEEEauthorrefmark{2}Eurecom, France, firstname.lastname@eurecom.fr}
}

\maketitle

\begin{abstract}
We consider a dense cellular network, in which a limited-size cache is available at every base station (BS). Coordinating content allocation across the different caches can lead to significant performance gains, but is a difficult problem even when full information about the network and the request process is available. In this paper we present \qlrud, a general-purpose online caching policy that can be tailored to optimize different performance metrics also in presence of coordinated multipoint transmission techniques. The policy requires neither direct communication among BSs, nor a priori knowledge of content popularity and, under stationary request processes, has provable performance guarantees.
\end{abstract}

\section{Introduction}
\label{s:intro}
In the last years, we have witnessed  a dramatic  shift of traffic at  network edge, from  the  wired/fixed component to the wireless/mobile segment.
 This trend, mainly due  to the huge success of mobile devices (smartphones, tablets) and their pervasive applications  (Whatsapp, Instagram, Netflix, Spotify, Youtube, etc.), 
is expected to  further strengthen  in the next  few  years, as testified by several    traffic forecasts. For example 
according to CISCO  \cite{CISCO} in the 5 year interval   ranging from 2017 to 2022 traffic demand on the cellular network 
will approximately  increase by a factor of 9. 
As a consequence, the access (wireless and wired) infrastructure must be completely redesigned by densifying the cellular structure, and moving content closer to users. To this end,  the  massive deployment of  caches within   base stations  of the cellular 
network   is essential to effectively reduce the load on   the back-haul links, as well as limit latency perceived by the user. 

This work  considers a dense cellular network scenario, where caches are placed at every Base Station (BS) and a significant fraction of users  is ``covered'' by  several BSs (whose cells are said to ``overlap'').  The BSs in the transmission range of a given user can coordinate to offer a seamless optimized caching service to the user  and possibly exploit coordinated multipoint (CoMP) techniques~\cite{lee12} on the radio access.
We remark that, as soon as there are overlapping BSs, finding the optimal {offline} static content allocation becomes an NP-hard problem, even when the request process is known, the metric to optimize is the simple cache hit ratio, and coordinated transmissions are not supported~\cite{shanmugam13}.
But realistic scenarios are more complex: popularities are dynamic and unknown a priori and more sophisticated metrics (e.g., PHY-based ones) that further couple nearby BSs-caches are of interest. Moreover, centralized coordination of hundreds or thousands of caches per km$^2$ (e.g., in ultra-dense networks) is often infeasible or leads to excessive coordination overhead.

In  such a  context,  our paper provides an answer  to the open question about the existence  of general  (computationally efficient) distributed strategies for   edge-cache coordination,  which are able to provide some guarantees on global performance  metrics (like  hit ratio,  retrieval time,  load on the servers, etc.). 
In particular, we propose  a new policy---\qlrud---which provably achieves  a locally optimal configuration for general   performance metrics. 

\qlrud{} requires a simple modification to the basic behaviour of \qlru~\cite{garetto16}. 
Upon a hit at a cache, \qlrud{} moves the corresponding content to the front of the queue with a probability that is proportional to the marginal utility of storing this copy. Upon a miss, it introduces the new content with some probability $q$. \qlrud{} inherits from \qlru{} $\mathcal O(1)$ computation time per request and memory requirements proportional to the cache size. Its request-driven operation does not need a priori knowledge of content popularities, removing a limit of most previous work. 
Some information about the local neighborhood (e.g., how many additional copies of the content are stored at close-by caches also serving that user) may be needed to compute the marginal gain. Such information, however, is limited, and can be piggybacked on existing messages the user sends to query such caches, or even on channel estimates messages mobile devices regularly send to nearby BSs~\cite{LTE-book}. 
As an example, we show that \qlrud{} is a practical solution to optimize hit ratio, retrieval time, load on the servers, etc., both when a single BS satisfies the user's request and when multiple BSs coordinate their transmissions through CoMP techniques.


\subsection{Related work}
We limit ourselves to describe work that  specifically addresses the caching problem  in dense cellular networks.

The idea of coordinating the  placement of contents at caches, which are closely located at BSs, was first proposed in~\cite{Caire12} and its extension \cite{shanmugam13} under the name of FemtoCaching. This work assumes that requests follow the Independent Reference Model (IRM) and  geographical popularity profiles are available, i.e., content requests are independent and request rates are known for all cell areas and their intersections. Finding the optimal content placement that maximizes the hit ratio is proved to be 
an NP-hard problem, but a greedy heuristic algorithm is shown to guarantee a $\frac{1}{2}$-approximation of the maximum hit ratio. 
In \cite{Poularakis14},  the authors generalized the approach of \cite{Caire12,shanmugam13}, providing a formulation for the joint content-placement and user-association problem that   maximizes the hit ratio. They also proposed efficient heuristic solutions. {This line of work has been further extended in~\cite{saputra19}, which also considers the request routing problem.}
Authors of \cite{Naveen15}  included the bandwidth costs  in the formulation,
and  proposed an on-line algorithm for the solution of the resulting problem.
In \cite{Chattopadhyay18}, instead, the authors designed a  distributed 
algorithm based on Gibbs sampling, which was shown to asymptotically converge to the optimal allocation.
Reference~\cite{Anastasios2}~revisits the  optimal  content placement problem within a 
stochastic geometry  framework and 
derives an elegant analytical characterization of the optimal policy and its performance. In \cite{avrachenkov17} the authors developed a few asynchronous 
distributed content placement algorithms with polynomial complexity
and limited  communication overhead (communication takes place only between overlapping cells), whose performance was shown to be very good 
in most of the tested scenarios. Still,  they assumed  that  content popularities are perfectly known by the system. Moreover they focused   on cache hit rates, and did not consider CoMP.

One of the first papers that jointly considers caching and CoMP techniques was~\cite{ao15}: two BSs storing the same file can coordinate its transmission to the mobile user in order to reduce the delay or to increase the throughput. The authors considered two caching heuristics: a randomized caching
policy combined with maximum ratio transmission precoding and a threshold policy combined with zero forcing beamforming. 
These policies are in general suboptimal with no theoretical performance guarantee. Reference~\cite{tuholukova17} addresses this issue for joint transmissions techniques. The authors proved that delay minimization leads to a submodular maximization problem as long as the backhaul delay is larger than the transmission delay over the wireless channel. Under such condition, the greedy algorithm provides again a guaranteed approximation ratio. Reference~\cite{chen17} considers two different CoMP techniques, i.e.,~joint transmission and parallel transmission, and derives formulas for the hit rate using tools from stochastic geometry.

Nevertheless, all aforementioned works hold the limiting assumption in~\cite{Caire12} that geographical content popularity profiles are  known by the system. 
Reliable popularity estimates over small geographical areas may be very hard to obtain~\cite{leconte16}.
On the contrary, policies like \lru{} and its variants (\qlru, \textsc{2LRU}, \dots) do not rely on popularity estimation and are known to 
well behave under time-varying popularities. For this reason they are a de-facto standard in most of the deployed caching systems.
Reference~\cite{giovanidis16} proposes a generalization of \lru{} to a dense cellular scenario. As above, a user at the intersection of multiple cells can check the availability 
of the content at every covering cell and then download from one of them. 
The difference with respect  to standard \lru{} is how cache states are updated. In particular, the authors of~\cite{giovanidis16} considered two schemes: \lruone{} and \lruall. In \lruone, each user is assigned to a reference cell/cache and only the state of her reference cache is updated upon a hit or a miss, independently from which cache the content has been retrieved from. In \lruall, the state of all caches covering the user is updated. 

Recently, \cite{paschos19}  proposed a novel approach to design coordinated caching polices in the framework of online linear optimization. A projected gradient method is used to tune the fraction of each content to be stored in a cache and regret guarantees are proved. Unfortunately, this solution requires to store pseudo-random linear combinations of original file chunks, and, even ignoring the additional cost of coding/decoding, it has $\mathcal O(F)$ computation time per request as well as $\mathcal O(F)$ memory requirements, where $F$ is the catalogue size. Also, coding excludes the possibility to exploit CoMP techniques, because all chunks are different. {A caching algorithm  resorting on a  deep  reinforcement  learning approach was instead recently proposed in \cite{wu2019dynamic}}.

Lastly, reference \cite{leonardi18jsac} proposes a novel approximate analytical approach  to study systems of interacting caches, under different caching policies, whose predictions are surprisingly accurate. The framework builds upon the well known  characteristic time  approximation~\cite{che02} for individual caches as well as an exponentialization approximation. We also rely on the same approximations, which are described in Sect.~\ref{s:optimality}. \cite{leonardi18jsac}~also proposes the policy \qlrulazy{}, whose adoption in a dense cellular scenario is shown to achieve hit ratios very close to those offered by the greedy scheme proposed in~\cite{Caire12} even without information about popularity profiles.
\qlrud{} generalizes  \qlrulazy{} to different metrics as well as CoMP transmissions.
{Furthermore, the  analytical results about optimality obtained in this paper, in which we adopt a different technique, are  significantly stronger, while  more elegant and concise.
In this paper, indeed, we prove  global optimality  for \qlrud{}  while in   \cite{leonardi18jsac} only local optimality has been shown for \qlrulazy{}.}

\subsection{Paper Contribution}
\label{s:contri}
The main contribution of this paper is the proposal of \qlrud, a general-purpose caching policy that can be tailored to optimize different performance metrics. The policy implicitly coordinates caching decisions across different caches also taking into account joint transmission opportunities. \qlrud{} is presented in details in Sect.~\ref{s:operation}, after the introduction of  our network model in~Sect.~\ref{s:network_model}.

Sect.~\ref{s:optimality} is devoted to prove that, under a stationary request process, 
\qlrud{} achieves an optimal configuration as the parameter $q$ converges to $0$.
The proof is technically sophisticated: it relies on the characterization of stochastically stable states using techniques originally proposed by  P.~R.~Kumar and his coauthors~\cite{connors88,connors89,desai94} to study simulated annealing. In a previous version of this report~\cite{arxiv1} we used a different approach inspired by~\cite{young93} to prove the following weaker result: it is not possible to replace a single content at  one  of  the  caches  and  still  improve  the  performance metric of interest.

In order to illustrate the flexibility of~\qlrud, we show in Sect.~\ref{s:case_studies} how to particularize the policy for two specific performance metrics, i.e.,~the hit rate and the retrieval delay under CoMP. While our theoretical guarantees hold only asymptotically, numerical results show that \qlrud{} with $q\in [0.01,0.1]$ already approaches the performance of the {offline} allocation obtained through greedy, which, while not provably optimal, is the best baseline we can compare to. Note that the greedy algorithm requires complete knowledge of network topology, transmission characteristics, and request process, while \qlrud{} is a reactive policy that relies only on a noisy estimation of the marginal benefit deriving from a local copy. 

We remark that the goal of \qlrud{} and this paper is not to propose ``the best'' policy for \emph{any} scenario with ``coupled'' caches, but rather a simple and easily customizable policy framework with provable theoretical properties.
Currently, new caching policies designed for a particular scenario/metric are often compared with classic policies like \lru{} or \lfu{} or  the more recent \lruone{} and \lruall{}. This comparison appears to be quite unfair, given that these policies 1) ignore or only partially take into account the potential advantage of coordinated content allocations and 2) all target the hit-rate as performance metric. \qlrud{} may be a valid reference point, while being simple to implement. A Swiss-army knife is a very helpful object to carry around, even if each of its tools may not be the best one to accomplish its specific task.

\section{Network model}
\label{s:network_model}
We consider a set of $B$ base stations (BSs) arbitrarily located in a given region $R \subseteq \mathbb R^2$, each equipped with a local cache with size $C$.
Users request contents from a finite catalogue of size $F$.
{Given a content $f$, 
a specific allocation of  its copies  across the caches is specified by the vector $\x_f = (x^{(1)}_f, x^{(2)}_f,\dots, x^{(B)}_f)$,} where $x^{(b)}_f = 1$ (resp.~$x^{(b)}_f=0$) indicates that a copy of $f$ is present (resp.~absent) at BS $b$.
Let $\mathbf e^{(b)}$ be the vector with a $1$ in position $b$ and all other components equal to $0$. We write $\x_f \oplus \mathbf e^{(b)}$ to indicate a new cache configuration where a copy of content $f$ is added at base station $b$, 
if not already present {(i.e., $\x_f\oplus e^{(b)}=\x_f$  whenever $x_f^{(b)}=1$)}.
 Similarly,  $\x_f \ominus \mathbf e^{(b)}$ indicates a new allocation where there is no  copy of content $f$ at $b$ { ($\x_f \ominus \mathbf e^{(b)}=\x_f $ whenever $x_f^{(b)}=0$)}.
{Finally, we denote by $\X_f(t)=\left(X^{(1)}_f(t), \dots, X^{(B)}_f(t)\right)$, the specific content $f$ configuration  at time $t$.}

When user $u$ requests and receives content $f$, some network stakeholder
 achieves a gain  that we assume to depend on user $u$, content $f$ and the current allocation of content $f$ copies ($\X_f(t)$). We denote the gain as $g_f(\X_f(t),u)$. For example, if the key actor is the content server, $g_f(\X_f(t),u)$ could be the indicator function denoting if $u$ can retrieve the content from one of the local caches (reducing the load on the server). If it is the network service provider, $g_f(\X_f(t),u)$ could be the number of bytes caching prevents from traversing bottleneck links. Finally, if it is the user, $g_f(\X_f(t),u)$ could be the delay reduction achieved through the local copies. We consider that $g_f(\mathbf 0,u)=0$, i.e.,~if there is no copy of content $f$, the gain is zero.

The gain $g_f(\X_f(t),u)$ may be a random variable. For example, it may depend on the instantaneous characteristics of the wireless channels, or on some  user's random choice like the BS from which the file will be downloaded. We assume that, conditionally on the network status  $\X_f(t)$ and the user  $u$,
 these random variables are independent from one request to the other and are identically distributed with expected value $\E[g_f(\X_f(t),u)]$.

Our theoretical results hold under a stationary request process. In particular, we consider two settings. In the first one, there is a finite set of $U$ users located at specific positions. Each user $u$ requests the different contents according to independent Poisson process with rates $\lambda_{f,u}$ for $f \in \{1, 2, \dots, F\}$. The total expected gain per time unit from a given placement $\x_f$ is 
\begin{equation}
	G_f(\x_f) = \sum_{u=1}^U \lambda_{f,u} \E\left[g_f(\x_f,u)\right].
\end{equation}
In the second setting, a potentially unbounded number of users are spread over the region $R$ according to a Poisson point process with density $\mu()$.
Users are indistinguishable but for their position $\mathbf r$. In particular, a user $u$ in $\mathbf r$  generates a Poisson request process with rate $\lambda_f(\mathbf r)$ and experiences a gain $g_f(\x_f,\mathbf r)$.
The total expected gain from a given placement of content~$f$ copies is in this case
\begin{equation}
	G_f(\x_f) = \int_R \lambda_{f}(\mathbf r) \E\left[g_f(\x_f,\mathbf r)\right] \mu(\mathbf r) \textrm d \mathbf r.
\end{equation}
{We observe that $G_f(\cdot)$ is non negative and non-decreasing in the sense that $G_f(\x_f \oplus \mathbf e^{(b)})\ge G_f(\x_f)$, for each $\x_f$ and each $b$.}

In what follows, we will refer to the marginal gain from a copy at base station $b$. When the set of users is finite, we define the following quantities, respectively for a given user and for the whole network:
\begin{align}
&\Delta g_f^{(b)}(\x_f, u) \triangleq   g_f(\x_f,u) - g_f(\x_f \ominus \mathbf e^{(b)},u),\\
& \Delta G_f^{(b)}(\x_f) \triangleq  G_f(\x_f) - G_f(\x_f \ominus \mathbf e^{(b)}) \label{e:deltaGb}
\end{align}
{$\Delta g_f^{(b)}(\x_f, u)$ represents the cost reduction observed by user $u$ when the system 
moves from state $\x_f \mathbf e^{(b)}$ to state $ \x_f$. $\Delta G_f^{(b)}(\x_f) $ represents the average cost reduction when the system 
moves from state $ \x_f \ominus  \mathbf  e^{(b)}$ to state~$\x_f$.} 
It is possible to definite similarly $\Delta g_f^{(b)}(\x_f, r)$ when users' requests are characterized by a density over the region $R$.  In what follows, we will usually refer to the case of a finite set of users, but all results hold in both scenarios.

We would like our dynamic policy to converge to a content placement that maximizes the total expected gain, i.e.,
\begin{align}
\label{e:static_opt_gen}
& \maxim_{\x_1, \x_2, \dots, \x_F}
& & G(\x) \triangleq \sum_{f=1}^F  G_f(\x_f) \\ 
& \text{subject to}
& & \sum_{f=1}^F x_f^{(b)} =C  \;\;\; \forall b = 1, \ldots, B,\nonumber\\ 
& & & x_f^{(b)} \in \{0,1\} \;\;\; \forall f =1, \ldots, F, \nonumber\\
& & & \;\;\;\;\;\;\;\;\;\;\;\; \;\;\; \;\;\;\;\;\;\;\forall b = 1, \ldots, B. \nonumber 
\end{align}
even in the absence of a priori knowledge about the request process.
In the three specific examples we have mentioned above, solving problem~\eqref{e:static_opt_gen} respectively corresponds to 1) maximize the hit ratio, 2) minimize the network traffic, and 3) minimize the retrieval time. This problem is in general NP-hard, even in the case of the simple hit ratio metric~\cite{shanmugam13}.
{
Note also that it is possible to define opportunely the gain function to take into account a notion of fairness across contents, for example to determine a weighted $\alpha$-fair cache allocation~\cite{kelly14stochastic_networks}.
}

\section{\qlrud}
\label{s:operation}

We describe here how our system operates and the specific caching policy we propose to approach the solution of Problem~\eqref{e:static_opt_gen}. 

When user $u$ has a request for content $f$, it broadcasts an inquiry message to the set of BSs ($I_u$) it can communicate with.
 The subset ($J_{u,f}$) of those BSs that have the content $f$ stored locally declare their availability to user~$u$. If no local copy is available, the user sends the request to one of the BSs in $I_u$, which will need to retrieve it from the content provider.\footnote{
	This two-step procedure introduces some additional delay, but this is inevitable in any femtocaching scheme where the BSs need to coordinate to serve the content.
} If a local copy is available ($J_{u,f}\neq \emptyset$) and only point-to-point transmissions are possible, the user sends an explicit request to download it to one of the BSs in $J_{u,f}$. Different user criteria  can be defined to select the BS {in $J_{u,f}$} to download from (e.g., SNR, or pre-assigned priority list~\cite{LTE-book}). 
If CoMP techniques are supported, then all the BSs in $J_{u,f}$ coordinate to jointly transmit the content to the user.

Our policy \qlrud{} works as follows. Each BS $b$ with a local copy ($b \in J_{u,f}$) moves the content to the front of the cache with probability proportional to the marginal gain due to the local copy, i.e.,
\begin{equation}
\label{e:update}
p_f^{(b)}(u) = \beta \Delta g_f^{(b)}(\X_f(t),u),
\end{equation}
where the constant {$\beta\le \left(\max_{u, b, \x_f} \Delta g_f^{(b)}(\x_f,u)\right)^{-1}$ guarantees that the RHS of above equation is 
always in $[0,1]$}.
At least one of the BSs without the content (i.e., those in $I_{u,f} \setminus J_{u,f}$) {will store} an additional copy of $f$ with probability
\begin{equation}
\label{e:miss}
q_f^{(b)}(u) =  q^{(b)} \delta \Delta g_f^{(b)}(\X_f(t) \oplus \mathbf e^{(b)},u),
\end{equation}
where $\delta$ plays the same role of $\beta$ above and $q^{(b)}$ is a dimensionless parameter in $(0,1]$.

{
Some information about the local neighborhood (e.g., how many additional copies of the content are stored at close-by caches also serving that user) may be needed to compute the marginal gains in \eqref{e:update} and \eqref{e:miss}. Such information, however, is limited, and can be piggybacked on existing messages the user sends to query such caches, or even on channel estimates messages mobile devices regularly send to nearby BSs. In Sect.~\ref{s:case_studies} we detail what information needs to be exchanged when the system aims to maximize the hit rate or minimize the delay.
}

We are going to prove that \qlrud{} is asymptotically optimal when the values $q^{(b)}$ converge to $0$. 
This result holds under different variants for~\eqref{e:update} and~\eqref{e:miss}.
First, as it will be clear from the discussion in the following section, 
our optimality result depends on $\E[p_f^{(b)}(u)]$ being proportional to $\E[\Delta g_f^{(b)}(\X_f(t),u)]$. Then it is possible to replace $g_f^{(b)}(\X_f(t),u)$ in~\eqref{e:update} with any other unbiased estimator of $\E[g_f^{(b)}(\X_f(t),u)]$.
We are going to show an example when this is useful in~Sect.~\ref{s:case_studies}. 
 {upon a favourable random outcome, content $f$ can be retrieved
simultaneously by}
any number ($>0$) of BSs in $I_{u,f} \setminus J_{u,f}$ and the probability {$q_f^{(b)}(u)$ could be simply   chosen 
equal to $q^{(b)}$, i.e., made independent of the caching allocation.} 
We propose~\eqref{e:miss} because this rule is more likely to add copies that bring a large benefit $\Delta g_f^{(b)}(\X_f(t) \oplus \mathbf e^{(b)},u)$. This choice likely improves convergence speed, and then the performance in non-stationary popularity environments.

\section{Optimality of \qlrud}
\label{s:optimality}

We are going to prove that \qlrud{} achieves a locally optimal configuration when the values $q^{(b)}$ vanish. The result relies on two approximations: the usual characteristic time approximation (CTA) for caching policies (also known as Che's approximation)~\cite{fagin77,che02} and the new exponentialization approximation (EA) for networks of interacting caches originally proposed in~\cite{leonardi18jsac}.
The main results of this paper is the following:
 \begin{prop}
\label{p:qlrud_convergence_general} \textbf{[loose statement]}
Under characteristic time and exponentialization approximations, a spatial network of \qlrud{} caches asymptotically achieves an optimal caching configuration when $q^{(b)}$ vanish.
 \end{prop}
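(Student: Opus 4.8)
The plan is to lift the request-driven dynamics to a finite-state Markov chain and then characterize the configuration it selects as the insertion parameters vanish. First I would invoke the characteristic time and exponentialization approximations to replace deterministic eviction by independent exponential clocks, so that the joint cache state $\X(t)=(\X_1(t),\dots,\X_F(t))\in\{0,1\}^{F\times B}$ becomes a continuous-time Markov chain $\X^{(q)}(t)$ indexed by $q=(q^{(b)})$. Two features of the policy drive everything. On the one hand, evictions and refreshes occur at $\mathcal O(1)$ rates while the insertion of a copy at $b$ occurs at rate proportional to $q^{(b)}\,\delta\,\Delta G_f^{(b)}(\x_f\oplus\mathbf e^{(b)})$, vanishing with $q^{(b)}$. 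On the other hand, the characteristic times $T_C^{(b)}=T_C^{(b)}(q)$ are not free: they are fixed by the self-consistency requirement that each cache's expected occupancy equal $C$, and as $q^{(b)}\to 0$ they must diverge (roughly like $\log(1/q^{(b)})$) so that the few contents that do get inserted survive long enough to keep the cache full. This diverging $T_C^{(b)}$ is precisely what plays the role of an inverse temperature.

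The second step is to read $\{\X^{(q)}\}$ as a perturbed Markov family and apply the characterization of stochastically stable states developed for simulated annealing by Kumar and coauthors~\cite{connors88,connors89,desai94}. The support of $\lim_{q\to0}\pi^{(q)}$ is the set of recurrent states minimizing the \emph{stochastic potential}, i.e.\ the minimum total resistance over directed in-trees rooted at the state, where the resistance of a transition is the exponential rate at which its probability decays. Crucially, because the survival time of a copy scales like $\exp(\beta\,\Delta G_f^{(b)}\,T_C^{(b)})$ while $T_C^{(b)}\to\infty$, these resistances are continuous, $\Delta g$-dependent quantities rather than mere integer powers of $q$, exactly as in annealing where resistances are energy gaps. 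Reducing the theorem to this tree computation is what lets us surpass the one-copy-swap guarantee of~\cite{arxiv1}: a tree rooted at a merely locally optimal configuration generally has strictly larger weight than one rooted at a global maximizer, because the construction accounts for escape paths through the whole state graph, not only single-flip neighbors.

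The third, central step is to show that the telescoping of marginal gains makes $G(\x)$ a bona fide potential, so that the stochastic-potential minimizers are exactly the feasible maximizers of $G$. The design choice that both refresh and insertion probabilities are proportional to the \emph{same} marginals $\Delta g_f^{(b)}$ gives high-marginal-gain contents a compounded advantage: they are inserted more frequently (rate $\propto q^{(b)}\Delta G_f^{(b)}$) and, because their refresh rate $\beta\,\Delta G_f^{(b)}$ is larger, evicted much more slowly. Chaining the stationary-weight ratios of adjacent configurations through a Kolmogorov-cycle / detailed-balance-type argument, and using that summing $\Delta G_f^{(b)}$ along a build-up of $\x_f$ from $\mathbf 0$ reconstructs $G_f(\x_f)$, yields a Gibbs-like form in which $G$ sits in the exponent weighted by the diverging $T_C$; letting $q\to0$ then concentrates all mass on $\argmax G$ subject to the occupancy constraints of~\eqref{e:static_opt_gen}.

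The step I expect to be the main obstacle is controlling the two couplings simultaneously. The capacity constraint is enforced only in expectation through the self-consistent $T_C^{(b)}(q)$, which themselves diverge; hence the perturbation analysis and the fixed-point equations for $T_C^{(b)}$ must be carried through jointly rather than in sequence. Moreover, under CoMP the gain $g_f(\X_f,u)$ couples several base stations at once, so $\Delta G_f^{(b)}(\x_f)$ depends on copies held at neighboring caches and the resistance graph does not factorize across $b$. The delicate point will be to verify that, despite this non-separability, the marginal gains still telescope consistently (equivalently, that the required cycle conditions hold), so that $\sum_f G_f$ remains a single global potential and the minimum-resistance-tree computation provably selects the global optimum rather than only a locally optimal allocation.
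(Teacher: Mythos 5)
Your skeleton matches the paper's own proof: CTA and EA reduce the dynamics to per-content Markov chains coupled only through the characteristic times, transition probabilities become powers of $q$ (after setting $q^{(b)}=q^{\gamma_b}$ so that $T_c^{(b)}\sim\frac{1}{\beta\gamma_b}\log\frac1q$), and the stochastically stable states are characterized with the Connors--Kumar machinery for simulated annealing. The genuine gap is in your third step. Chaining stationary-weight ratios does \emph{not} produce a Gibbs measure with $G_f$ in the exponent: the upward resistance $\gamma_{b}$ per inserted copy never disappears, so what each per-content chain concentrates on is the set of maximizers of the \emph{penalized} potential $\phi_f(\x_f)=G_f(\x_f)-\vgamma^\intercal\x_f$ (the paper's Lemma~\ref{l:our_balance} and Corollary~\ref{c:stochastically_stable}), not of $G_f$ itself. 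The capacity constraint appears nowhere in the dynamics of any single chain; it enters only through the $\gamma_b$'s fixed by the CTA constraint~\eqref{e:che_multi_cache}. Your proposal therefore still needs the closing argument that unconstrained, content-by-content maximization of $G_f-\vgamma^\intercal\x_f$ yields a globally optimal \emph{constrained} allocation. This is a Lagrangian duality step, and it is exactly what the paper does in Appendix~\ref{a:proof_qlrud}: it formulates the continuous relaxation~\eqref{e:relaxed_opt} (needed because the limiting stationary distributions may spread mass over several configurations and satisfy capacity only in expectation), and verifies the KKT conditions with $\chi_b^*=\gamma_b$ as capacity multipliers, $\zeta_f^*=\max_{\x_f'}\phi_f(\x_f')$, and $\alpha_f^*=\pi_{f,0^+}$ as primal solution. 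Without identifying the $\gamma_b$'s as dual prices, asserting concentration on ``$\argmax G$ subject to the occupancy constraints of~\eqref{e:static_opt_gen}'' is a non sequitur; note also that the provable statement is optimality for the relaxed problem~\eqref{e:relaxed_opt}, not the integer problem.

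Two secondary points. First, the obstacle you flag for CoMP---that the cycle/telescoping conditions might fail because $\Delta G_f^{(b)}(\x_f)$ couples several BSs---is actually vacuous: the downward resistance is \emph{by definition} $\Delta G_f^{(b_0)}(\y_f)=G_f(\y_f)-G_f(\y_f\ominus\mathbf e^{(b_0)})$, a difference of $G_f$ at the two endpoints, so pairwise balance with $\phi_f$ holds identically whatever the coupling. Second, the place where care is genuinely needed is one you gloss over: the chain is not reversible (a single request can insert copies at several BSs at once, and such multi-copy transitions have no direct reverse), so a ``Kolmogorov-cycle / detailed-balance'' derivation of a Gibbs form does not apply as stated. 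The paper instead works with the modified cut-max balance equations of Connors--Kumar and must prove (using monotonicity of $G_f$) that the maxima over each cut are attained at parent--child pairs, which is what licenses the pairwise-balance computation.
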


Before moving to the detailed proof, we provide some intuition about why this result holds. We observe that, as $q^{(b)}$ converges to $0$, cache $b$ exhibits two different dynamics with very different timescales: the \emph{insertion of new contents} tends to happen more and more rarely ($q_f^{(b)}(u)$ converges to $0$), while the frequency of \emph{position updates} for files already in the cache is unchanged ($p_f^{(b)}(u)$ does not depend on $q^{(b)}$). A file $f$ at cache $b$ is moved to the front with a probability proportional to $\Delta g_f^{(b)}(\X_f,u)$, i.e., proportional to how much the file contributes to improve the performance metric of interest. This is a very noisy signal: upon a given request, the file is moved to the front or not. At the same time, as $q$ converges to $0$, more and more moves-to-the-front occur between any two file evictions. The expected number of moves-to-the-front file $f$ experiences is proportional to 1) how often it is requested ($\lambda_{f,u}$) and 2) how likely it is to be moved to the front upon a request ($p_f^{(b)}(u)$). Overall, the expected number of moves is proportional to $\sum_u \lambda_{f,u}  \E\!\left[\Delta g_f^{(h)}(\X_f,u)\right]$, i.e.,~its contribution to the expected gain. By the law of large numbers, the random number of moves-to-the-front will be close to its expected value and it becomes likely that the least valuable file in the cache occupies the last position. 
We can then think that, when a new file is inserted in the cache, it will replace the file that contributes the least to the expected gain. \qlrud{} then behaves as 
a greedy algorithm that, driven by the request process, {replaces the least useful file in the cache at each insertion}, until it reaches a  maximum. 

\subsection{Characteristic Time Approximation} 
{In this section we focus on a single cache (i.e., one base station in isolation), or equivalently on a cache $b$ in a network of $B$ non-overlapping cells.}

{This is a  standard approximation for a cache in isolation}, and one of the most effective approximate approaches for analysis of caching systems. CTA was first introduced (and analytically justified) in~\cite{fagin77} and later rediscovered in~\cite{che02}. It was originally proposed for \lru{} under the IRM request process, and it has been later extended to different caching policies and different requests processes~\cite{garetto16,garetto15}. 

The characteristic time $T_c^{(b)}$ is the time a given content spends in the cache since its insertion until its eviction in absence of any request for it. In general,  {this quantity depends  in a complex way on the dynamics of other contents requests. Instead, the CTA assumes that $T_c^{(b)}$} 
 is  a random variable independent  from other contents dynamics and
with an assigned distribution (the same for every content). This assumption makes it possible to decouple the dynamics of the different contents: upon a miss for content $f$, the content is retrieved and a timer with random value $T_c^{(b)}$ is generated. When the timer expires, the content is evicted from the cache. 

Cache policies differ in {\it i)} the distribution of $T_c^{(b)}$ and {\it ii)} what happens to the timer upon a hit. For example, $T_c^{(b)}$ is a constant under \lru, \qlru, \twolru, and \fifo{} and exponentially distributed under \random. Upon a hit, the timer is renewed under 
\lru, \qlru, and \twolru, but not under \fifo{} and \random. {In what follows we will only consider policies for which $T_c^{(b)}$ is a constant.}
 Under CTA, the instantaneous cache occupancy can violate the hard buffer constraint.
{{The value of $T_c^{(b)}$ is obtained by imposing  the expected occupancy to be equal to the buffer size:
\begin{equation}
\label{e:che_single_cache}
\sum_{f=1}^F \pi_f^{(b)} = C
\end{equation}
where $\pi_f^{(b)}$ denotes the probability that content $f$ is in cache $b$. Its expression as a function of $T_c^{(b)}$ depends on the specific caching policy~\cite{garetto16}.}
Despite its simplicity, CTA  was shown  to provide asymptotically exact predictions for a single \lru{} cache under IRM  as the cache size grows large~\cite{fagin77,Jele99,fricker2012}.

Once inserted in the cache, a given content $f$ will sojourn in the cache for a random amount of time $T_{S,f}^{(b)}$, independently from the dynamics of other contents. $T_{S,f}^{(b)}$ can be characterized for the different policies. In particular, if the timer is renewed upon a hit, we have:
{
\begin{equation}\label{sojourn-time-struct}
T_{S,f}^{(b)}= \sum_{k=1}^{\infty} Y_k \mathrm{1}_{\{Y_1< T_c^{(b)}, \ldots, Y_k<T_c^{(b)} \}}+ T_c^{(b)}=  \sum_{k=1}^M Y_k+ T_c^{(b)},
\end{equation} 
where $M \in \{0,1, \dots\}$ is the number of consecutive hits following a miss, and \changes{$Y_k$ is the time interval between the $k$-th request following a miss and the previous content request}. 
}

We want to compute the expected value of $T_{S,f}^{(b)}$ that we denote as $1/\nu_f^{(b)}$.
When the number of users is finite, requests for content $f$ from user~$u$ arrive according to a Poisson process with rate $\lambda_{f,u}$. The time instants at which content $f$ is moved to the front are generated by  thinning this Poisson process with probability $\beta \E[\Delta g_f^{(b)}(u)]$.\footnote{
	Here we simply write $\Delta g_f^{(b)}(u)$ instead of $\Delta g_f^{(b)}(\X_f^{(b)},u)$, because we are considering a single cache. Similary, we write $\Delta G_f^{(b)}$, instead of $\Delta G_f^{(b)}(\X_f(t))$.
} The resulting sequence is then also a Poisson process with rate $\lambda_{f,u} \beta \E[\Delta g_f^{(b)}(u)]$. 
Finally, as request processes from different users are independent, the aggregate cache updates due to all users is a Poisson process with rate 
$$\beta \sum_{u=1}^U \lambda_{f,u} \E[\Delta g_f^{(b)}(u)] = \beta \Delta G_f^{(b)}.$$
The same result holds when we consider a density of requests over the region $R$.

As the aggregate cache updates follow a Poisson process with rate $\beta \Delta G_f^{(b)}$,  $\{Y_k\}$ are i.i.d. truncated exponential random variables with rate $\beta \Delta G_f^{(b)}$ over the interval $[0,T_c^{(b)}]$ and their expected value is 
\[  \E[Y_k]= \frac{1}{\beta \Delta G_f^{(b)}} - \frac{T_c^{(b)} } {e^{\beta \Delta G_f^{(b)} T_c^{(b)}} -1} .\]
Moreover, the probability that no update occurs during a time interval of length $T^{(b)}_c$ is $e^{-\beta \Delta G_f^{(b)} T^{(b)}_c}$. Then $M$ is distributed as a geometric random variable with values $\{0, 1, \dots\}$ with expected value
\[\E[M]=\frac{1-e^{-\beta \Delta G_f^{(b)} T_c^{(b)}}}{e^{-\beta \Delta G_f^{(b)} T_c^{(b)}}}= e^{\beta \Delta G_f^{(b)} T_c^{(b)}} -1.\] 
{Since $M$ is clearly a stopping point for the sequence $\{ Y_k\}_k$,} we can then apply Wald's Lemma to \eqref{sojourn-time-struct} obtaining:
\begin{align}
\label{e:rate}
\nu_f^{(b)}& \triangleq  \frac{1}{\mathbb{E}[T_{S,f}^{(b)}]} =   \frac{1}{\mathbb{E}[Y_1] \;\mathbb{E}[M]  +T_c^{(b)}}\nonumber\\
 	& = \frac{\beta \Delta G_f^{(b)}}{e^{\beta \Delta G_f^{(b)} T_c^{(b)}}-1}.
\end{align}

\subsection{Exponentialization Approximation}
We consider now the case when $B$ cells may overlap.  
The sojourn time of content $f$ inserted at time $t$ in cache~$b$ will now depend on the whole state vector $\X_f(\tau)$ for $\tau \ge t$ (until the content is not evicted), because the content is updated with probability~\eqref{e:update} depending on the marginal gain of the copy (and then on $\X_f(\tau)$). EA consists to assume that the stochastic process $\X_f(t)$ is a continuous-time Markov chain.  For each $f$ and $b$ the transition rate $\nu_f^{(b)}$ from state $\X_f(t)=(x_f^{(b)}=1, \x_f^{(-b)})$ to $(x_f^{(b)}=0, \x_f^{(-b)})$ is given by \eqref{e:rate} with $\Delta G_f^{(b)}$ replaced by $\Delta G^{(b)}_f(\X_f(t))$.
EA replaces then the original stochastic process, whose analysis is extremely difficult, {with a set of MCs $\X_f(t)$, for $f=1, \dots, F$, which are only coupled through the characteristic times $T_c^{(b)}$ at the BSs.} Reference~\cite{leonardi18jsac} shows that this has no impact  on  any system metric that depends only on the stationary distribution in the following cases:
\begin{enumerate}
	\item isolated caches,
	\item caches using \random{} policy,  
	\item caches using \fifo{} policy as far as the resulting Markov Chain $\X_f(t)$ is reversible.
\end{enumerate}
Numerical results in \cite{leonardi18jsac} show that the approximation is practically very accurate also in more general cases.

{Similarly to what done for a single cache, we can determine the values $T_c^{(b)}$ at each cache, by imposing that:
\begin{equation}
	\label{e:che_multi_cache}
	\sum_{f=1}^F\sum_{\x_f \in \{0,1\}^B} x_f^{(b)} \pi_f(\x_f) = C,
\end{equation}
where $\pi_f(\x_f)$ denotes the stationary probability that MC $\X_f(t)$ is in state $\x_f$.
}
\subsection{Transition rates of the continuous time Markov Chain as $q$ vanishes}

For a given content $f$, let  $\x_f$ and $\y_f$  be two possible states of the MC $\X_f(t)$. We write 
  $\x_f < \y_f$    whenever  $x_f^{(b)} \le  y_f^{(b)}$ for each $b$ and there is at least one $b_0$ such that $x_f^{(b_0)} <  y_f^{(b_0)}$, and we say that $\bold{y}_f$ is an \emph{ancestor} of $\x_f$, and $\x_f$ is a \emph{descendant} of $\bold{y}_f$. Furthermore we denote by  
 $|\x_f|=\sum_b x_f^{(b)}$ the number of  copies of  content $f$ stored in state $\x_f$, and we call it the weight of the state $\x_f$. If $\x_f < \y_f$ and $|\x_f|=|\y_f|-1$, we say that  $\y_f$ is a \emph{parent} of $\x_f$ and $\x_f$ is a \emph{child} of $\y_f$.

 Now  observe that, by construction, transition rates in the MC are different from 0 only between pair of states  $\x_f$ and $ \y_f$,  
   such that   $\x_f < \y_f$ or $\y_f < \x_f$.  The transition $\x_f \to \y_f$ is called an \emph{upward} transition, while  $\y_f \to \x_f$ is called a \emph{downward} transition.
   
 A downward transition can only occur from a parent to a child ($|\x_f |= |\y_f |-1$). Let  $b_0$ be the index such that    $x_f^{(b_0)}<y_f^{(b_0)}$. We have that the downward rate is
\begin{equation}
\label{e:downward_rate}
\rho_{[\y_f \to \x_f]} = \nu_f^{(b_0)}(\y_f) = \frac{\beta \Delta G_f^{(b_0)}}{e^{\beta \Delta G_f^{(b_0)}(\y_f) T_c^{(b_0)}}-1}.
\end{equation}
Upward transitions can occur to states that are ancestors. The exact transition rate between state $\x_f$ and state $\y_f$ with $\x_f < \y_f$ can have a quite complex expression, because it depends on the joint decisions of the BSs in $I_{u,f}\setminus J_{u,f}$. Luckily, for our analysis, we are only interested in how this rate depends on $q$, when $q$ converges to $0$. We use the symbol $\propto$ to indicate that  two quantities are asymptotically proportional for small $q$, i.e., $f(q) \propto g(q)$ if and only if there exists a strictly positive constant $a$ such that $\lim_{q \to 0} f(q)/g(q)=a$. If $a=1$, then we write $f(q)\sim g(q)$ following Bachmann-Landau notation.

Upon a request for $f$, a transition $ \x_f\to \y_f$ occurs, if $|\y_f| - |\x_f|$ BSs  independently store, each with probability proportional to its parameter $q^{(b)}$,  an additional copy of the content $f$ in their local cache. It follows that:
\begin{equation}
\label{e:upward_rate}
\rho_{[\x_f \to \y_f]} \propto \prod_{b | y_f^{(b)} - x_f^{(b)} = 1} q^{(b)}.
\end{equation}


Now, as $q^{(b)}$ converges to $0$,  for every $f$ every upward rate  $\rho_{[\x_f \to \y_f]}$ tends to 0. Therefore, the characteristic time of every cell $T_C^{(b)}$ must diverge. In fact, if it were not the case for a cache $b$,  none of the contents would be found in this cache  asymptotically, because upward rates would tend to zero, while downward rates would not. This would contradict the set of constraints~\eqref{e:che_multi_cache} imposed by the CTA.
Therefore necessarily $T_C^{(b)}$ diverges for every cell $b$.  More precisely, we must have
$T_C^{(b)}=\Theta(\log \frac{1}{q})$  at every cache, otherwise we fail to  meet~\eqref{e:che_multi_cache}.
In other words, there exist  positive constants $a_l^{(b)}$ and $a_u^{(b)}$, such that $T_C^{(b)}(q)/ \log (1/q)$ asymptotically belongs to $[a_l^{(b)},a_u^{(b)}]$. 
{
Given that the behavior $T_C^{(b)}(q)/ \log (1/q)$   is expected to be smooth, we assume that there exist (potentially different) positive constants  $\gamma_b$ for all $b \in \{1, \dots, B\}$ such that $T_C^{(b)}(q)\sim \frac{1}{\beta \gamma_{b}}(\log \frac{1}{q})$ and $\frac{1}{\beta \gamma_{b} }\in [a_l^{(b)},a_u^{(b)}]$.}

Now, we consider that BS $b$ employs $q^{(b)}= q^{\gamma_b}$. This choice makes the characteristic time scale in the same way at each cache: $T_C^{(b)}(q^{(b)}) \sim \frac{1}{\beta}\log \frac{1}{q}$. From this result and~\eqref{e:downward_rate}, it follows that a downward transition  from a parent $\y_f$ to a child $\x_f=\y_f \ominus \mathbf e^{(b_0)}$ occurs with rate 
\[ \rho_{[\y_f \to \x_f]} \propto q^{\Delta G_f^{(b_0)}(\y_f) }. \]

The following lemma summarises the results of this section.
{
\begin{lem}
\label{l:asymptotic_rates}
Consider two neighbouring states $\x_f$ and $\y_f$ with $\x_f < \y_f$ and the set of positive constants $\{\gamma_b, b=1, \dots, B\}$, such that $T_c^{(b)}(q^{(b)}) \sim \frac{1}{\beta \gamma_b} \log \frac{1}{q}$. If $q^{(b)} = q^{\gamma_b}$ then
\[ \rho_{[\x_f \to \y_f]} \propto q^{\vgamma^\intercal \left(\y_f - \x_f\right)}, \]
 if $ \x_f  =  \y_f \ominus \mathbf e^{(b_0)} $, then  
\[ \rho_{[\y_f \to \x_f]} \propto q^{\Delta G_f^{(b_0)}(\y_f) }. \]
\end{lem}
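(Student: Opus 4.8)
The plan is to obtain both asymptotics by direct substitution of the scaling choice $q^{(b)} = q^{\gamma_b}$, together with the resulting characteristic-time behaviour, into the rate expressions \eqref{e:upward_rate} and \eqref{e:downward_rate} already derived in this section. No new probabilistic argument is needed: the whole content lies in isolating the power of $q$ that survives as $q \to 0$ and in checking that the accompanying prefactor tends to a strictly positive constant, as the definition of $\propto$ demands.

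For the upward transition I would start from \eqref{e:upward_rate}, i.e. $\rho_{[\x_f \to \y_f]} \propto \prod_{b \,:\, y_f^{(b)} - x_f^{(b)} = 1} q^{(b)}$, and substitute $q^{(b)} = q^{\gamma_b}$ to rewrite the product as $q^{\sum_{b \,:\, y_f^{(b)} - x_f^{(b)} = 1} \gamma_b}$. Because $\x_f < \y_f$, every coordinate difference $y_f^{(b)} - x_f^{(b)}$ belongs to $\{0,1\}$, so the exponent coincides with $\sum_{b} \gamma_b \, (y_f^{(b)} - x_f^{(b)}) = \vgamma^\intercal(\y_f - \x_f)$. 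This yields the first claim and is purely algebraic.

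For the downward transition from a parent $\y_f$ to its child $\x_f = \y_f \ominus \mathbf e^{(b_0)}$ I would begin from \eqref{e:downward_rate}. In the relevant regime $\Delta G_f^{(b_0)}(\y_f) > 0$, the characteristic time diverges as $q \to 0$, so $e^{\beta \Delta G_f^{(b_0)}(\y_f) T_c^{(b_0)}} - 1 \sim e^{\beta \Delta G_f^{(b_0)}(\y_f) T_c^{(b_0)}}$; inserting the scaling $T_c^{(b_0)} \sim \frac{1}{\beta}\log\frac{1}{q}$ produced by the choice $q^{(b)} = q^{\gamma_b}$ turns this exponential into $q^{-\Delta G_f^{(b_0)}(\y_f)}$, while the numerator $\beta \Delta G_f^{(b_0)}(\y_f)$ is a fixed, $q$-independent positive constant. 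Collecting the factors gives $\rho_{[\y_f \to \x_f]} \propto q^{\Delta G_f^{(b_0)}(\y_f)}$, the second claim.

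The delicate step, which I expect to be the main obstacle, is precisely this passage from the characteristic-time asymptotic to the exponential factor. Writing out the ratio as $\rho_{[\y_f \to \x_f]} / q^{\Delta G_f^{(b_0)}(\y_f)} = \beta \Delta G_f^{(b_0)}(\y_f)\, e^{-\Delta G_f^{(b_0)}(\y_f)\left(\beta T_c^{(b_0)} - \log\frac{1}{q}\right)}(1+o(1))$, one sees that convergence to a strictly positive constant requires the additive correction $\beta T_c^{(b_0)} - \log\frac{1}{q}$ to converge to a finite limit, which is a genuinely stronger statement than the bare relation $T_c^{(b_0)} \sim \frac{1}{\beta}\log\frac{1}{q}$ (the latter controls only the leading term and tolerates a correction of size $o(\log\frac{1}{q})$, under which $e^{-\Delta G_f^{(b_0)}(\y_f)(\cdots)}$ need not stabilise). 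This is exactly where the smoothness assumption on $T_c^{(b)}(q)/\log(1/q)$ has to be used: it guarantees that the subexponential prefactor settles to a positive value, so that the limiting ratio $a$ exists and is strictly positive, and the $\propto$ relation holds in the strong (ratio) sense rather than merely at the level of the leading exponent of $q$.
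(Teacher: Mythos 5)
Your proposal is correct and follows essentially the same route as the paper: Lemma~\ref{l:asymptotic_rates} is stated there only as a summary of the preceding derivation, which obtains the upward asymptotic by substituting $q^{(b)}=q^{\gamma_b}$ into \eqref{e:upward_rate} and the downward one by inserting $T_c^{(b_0)}\sim\frac{1}{\beta}\log\frac{1}{q}$ into \eqref{e:downward_rate}, exactly as you do. Your final remark is well taken: since the paper defines $\propto$ in the strict ratio sense, the bare equivalence $T_c^{(b)}\sim\frac{1}{\beta\gamma_b}\log\frac{1}{q}$ (which tolerates $o(\log\frac{1}{q})$ corrections) does not by itself give the downward claim, and the paper silently relies on the stronger reading of its smoothness assumption---namely that $\beta T_c^{(b)}-\frac{1}{\gamma_b}\log\frac{1}{q}$ stabilises---which you correctly isolate as the needed ingredient.
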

}
From now on we will assume that $q^{(b)} = q^{\gamma_b}$. 

For each possible transition, we define its \emph{direct resistance} to be the exponent of the parameter $q$, then $r_f(\x_f,\y_f)=\vgamma^\intercal \left(\y - \x\right)$, $r_f(\y_f,\x_f)= \Delta G_f^{(b_0)}(\y_f)$ and $r_f(\x_f,\x_f)=0$. 
Observe that the higher the resistance, the less likely the corresponding transition.

\subsection{Stochastically stable states}
 In this section, we first introduce  the key concept of stochastically stable states, in which, as $q$ converges to $0$, the system gets trapped. Then,   we
 provide a characterization of   stochastically stable states (Corollary~\ref{c:stochastically_stable}), which  will be useful in Sect.~\ref{s:opt_proof} to prove that  they correspond to optimal configurations.

{
We consider the discrete time MC $\hat{\bold{X}}_f(k)$, obtained sampling the continuous time MC $\bold{X}_f(t)$ with a period $\tau>0$, i.e., $\hat{\bold{X}}_f(k)=\bold{X}_f(k \tau)$. 
}
Let $P_{f,q}$ denote the transition probability matrix of $\hat{\bold{X}}_f(k)$. For $q=0$, the set of contents in the cache does not change, each state is an absorbing one and any probability distribution is a stationary probability distribution for $P_{f,0}$. We are rather interested in the asymptotic behaviour of the MC when $q$ converges to $0$.
For $q>0$ the MC is finite, irreducible,\footnote{
	This is guaranteed if insertion probabilities in \eqref{e:miss} are positive. In some specific settings, it may be $\Delta g_f^{(b)}(\X_f(t) \oplus \mathbf e^{(b)},u)=0$ for each $u$. We can then consider $q_f^{(b)}(u) =  q \gamma \max(\Delta g_f^{(b)}(\X_f(t) \oplus \mathbf e^{(b)},u),\epsilon)$ with $\epsilon>0$, or simply $q_f^{(b)}(u) =q$.
}
 and aperiodic and then admits a unique stationary probability $\bold \pi_{f,q}$.
\begin{defn}\label{d:stocstable}
 A state $\x_f$ is called stochastically stable if $\lim_{q \to 0 } \bold \pi_{f,q}(\x_f) >0$. 
\end{defn}

We are going to characterize such states. {
The set of possible transitions of $\hat{\bold{X}}_f(k)$ is in general larger than the set of possible transitions of $\bold{X}_f(t)$, as multiple transitions of $\bold{X}_f(t)$ can occur during the period $\tau$. 
For example, $\bold{X}_f(t)$ cannot move directly from $\x_f$ to $\x''_f = \x_f \ominus \mathbf e^{(b_1)}\ominus \mathbf e^{(b_2)}$ with $|\x_f''| = |\x_f|-2$, but during the interval $\tau$ it could move from $\x_f$ to $\x'_f = \x_f \ominus \mathbf e^{(b_1)}$ and then from $\x'_f$ to $\x''_f$. 
The transition $\x_f \to \x''_f$ is then possible for  $\hat{\bold{X}}_f(k)$. At the same time, for small values of $\tau$ and of $q$, the probability of a direct transition $\x_f \to \x'_f$ is proportional to $q^{r(\x_f, \x_f')}  \tau + o\left(q^{r(\x_f, \x_f')}  \right) + o(\tau)$, but the probability of a combined transition $\x_f \to \x'_f \to \x''_f$ is smaller than $q^{r(\x_f, \x_f')+r(\x'_f, \x''_f)}  \tau^2 + o\left(q^{r(\x_f, \x_f')}  \right) + o\left(q^{r(\x'_f, \x''_f)}  \right) + o(\tau)$. These transitions may be neglected as their transition probabilities are $o(\tau)$ and their equivalent resistance is equal to the sum of the direct transitions they are composed by. We can then restrict ourself to consider the transitions in $\bold{X}_f(t)$.}

{
Each MC $\hat{\X}_f(k)$ has then transition rates proportional to a power of $0<q<1$, i.e.~$P_{f,q}(\x_f,\x'_f) \propto q^{r_f(\x_f,\x'_f)}$.\footnote{We omit from now on, the proportionality to $\tau$.} These MCs were studies in a series of papers~\cite{connors88,connors89,desai94} by P.~R.~Kumar and his coauthors, because of their relation with the MCs that appear in simulated annealing problems, where $r_f(\x_f, \x'_f) = \max(C(\x'_f)-C(\x_f),0)$ and $C(\x_f)$ is a cost function we want to minimize. We list as lemmas three results from those papers we are going to use.

Consider a weighted graph $\mathcal G_f$, whose nodes are the possible states $\x_f \in \{0,1\}^B$ and edges indicate possible direct transitions and have a weight equal to the corresponding resistance. Given an in-tree $\mathcal T(\x_f)$ in $\mathcal G_f$ routed in $\x_f$, we denote by $r_f(\mathcal T(\x_f))$ the resistance of the in-tree, i.e., the sum of all resistances of the edges of $\mathcal T(\x_f)$. We also denote by $\mathfrak T(\x_f)$ the set of all in-trees routed in state $\x_f$. Finally, we denote by $r_f(\x_f)$ the resistance of the minimum weight in-tree (or anti-arborescence) in $\mathcal G_f$ rooted to $\x_f$, i.e.,
\[r_f(\x_f) \triangleq \min_{\mathcal T \in \mathfrak T(\x_f) }  r_f(\mathcal T).\]
Intuitively, the resistance of a state is a measure of the general difficulty to reach state $\x_f$ from all other nodes.
A consequence of the Markov chain tree theorem (see for example~\cite{anantharam89}) is that
\begin{lem}
\label{l:stationary_distribution}
\cite[Lemma~1]{desai94} The stationary probabilities of the MC $X_{f,q}(k)$ have the following expression
\[\pi_{f,q}(\x_f) \propto q^{r_f(\x_f) - \underset{\x'_f}{\min } \; r_f(\x'_f)}.\]
\end{lem}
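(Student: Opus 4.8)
The plan is to obtain the result as a direct consequence of the Markov chain tree theorem (the directed Matrix--Tree theorem), whose role is to convert the algebraic task of finding the stationary distribution of $P_{f,q}$ into a combinatorial sum over the spanning in-trees of the weighted graph $\mathcal G_f$. For a finite, irreducible chain the theorem gives an unnormalised stationary measure
\[ \tilde\pi_{f,q}(\x_f) = \sum_{\mathcal T \in \mathfrak T(\x_f)} \; \prod_{(\mathbf u \to \mathbf v) \in \mathcal T} P_{f,q}(\mathbf u, \mathbf v), \]
with $\pi_{f,q}$ recovered by normalisation. First I would recall and cite this statement (as in~\cite{anantharam89}), after checking that for $q>0$ the sampled chain $\hat{\X}_f(k)$ is indeed finite, irreducible and aperiodic, so that the theorem applies and the stationary distribution is unique. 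Here I also rely on the reduction argued just before the lemma: the sampled chain need only retain the single-step transitions of $\X_f(t)$, since combined (multi-step) transitions carry strictly larger equivalent resistance and probability $o(\tau)$, and hence do not affect the leading exponents $r_f(\mathbf u, \mathbf v)$ entering the tree weights.

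The second step is to substitute the asymptotic form of the transition probabilities. Because $P_{f,q}(\mathbf u, \mathbf v) \propto q^{r_f(\mathbf u, \mathbf v)}$ with a strictly positive constant on every edge of $\mathcal G_f$, the weight of an in-tree factorises into a single power of $q$ whose exponent is the total edge resistance,
\[ \prod_{(\mathbf u \to \mathbf v) \in \mathcal T} P_{f,q}(\mathbf u, \mathbf v) \propto q^{\sum_{(\mathbf u \to \mathbf v) \in \mathcal T} r_f(\mathbf u, \mathbf v)} = q^{r_f(\mathcal T)}. \]
Summing over all in-trees rooted at $\x_f$ then shows that $\tilde\pi_{f,q}(\x_f)$ behaves like $\sum_{\mathcal T} C(\mathcal T)\, q^{r_f(\mathcal T)}$ for strictly positive constants $C(\mathcal T)$. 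As $q \to 0$ this finite sum is dominated by the term(s) of smallest exponent, i.e.~by the minimum-weight in-tree, so that $\tilde\pi_{f,q}(\x_f) \propto q^{r_f(\x_f)}$ with $r_f(\x_f) = \min_{\mathcal T \in \mathfrak T(\x_f)} r_f(\mathcal T)$.

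Finally I would normalise. The normalising constant $\sum_{\x'_f} \tilde\pi_{f,q}(\x'_f)$ is itself a finite sum of terms each scaling as $q^{r_f(\x'_f)}$, hence dominated by the globally most reachable state and contributing a factor $q^{\min_{\x'_f} r_f(\x'_f)}$. Dividing the leading behaviour of the numerator by that of the denominator yields
\[ \pi_{f,q}(\x_f) \propto q^{\,r_f(\x_f) - \min_{\x'_f} r_f(\x'_f)}, \]
which is the claimed expression.

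I do not expect a serious obstacle here, and this is exactly why routing the argument through the tree theorem is attractive: the only delicate point is that every proportionality constant in sight is strictly positive, so no cancellation can occur and the leading power of $q$ genuinely survives in both numerator and denominator, validating the use of $\propto$ in the limit. The bookkeeping of which trees (resp.\ states) attain the minimum resistance only affects the positive multiplicative constant, which is absorbed into the $\propto$ notation and plays no role in the subsequent optimality analysis.
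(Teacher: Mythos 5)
Your proposal is correct and follows exactly the route the paper indicates: the paper does not prove this lemma itself but cites it as \cite[Lemma~1]{desai94}, remarking only that it is ``a consequence of the Markov chain tree theorem'' (citing~\cite{anantharam89}), and your argument is precisely that consequence worked out --- tree-sum representation of the stationary measure, factorisation of each tree weight into $q^{r_f(\mathcal T)}$, dominance of the minimum-resistance in-tree since all constants are strictly positive, and normalisation producing the exponent $r_f(\x_f)-\min_{\x'_f} r_f(\x'_f)$. Your handling of the sampled chain's composite transitions also matches the reduction the paper makes just before stating the lemma, so there is no gap.
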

A consequence of Lemma~\ref{l:stationary_distribution} is that the stochastically stable states are those with minimal resistance.

Consider the following system of \emph{modified balance equations} in the variables $\nu_f(\x)$:
\begin{equation}
\label{e:balance_eqs}
\left\{
\begin{aligned}
&  \underset{\x_f \in A, \z_f \in A^c}{\max}   \nu_f(\x_f)    -   r_f(\x_f,\z_f) \\
  &\phantom{===}  =  \underset{\x_f \in A, \z_f \in A^c}{\max} \nu_f(\z_f) - r_f(\z_f,\x_f),\;  \\
  &\phantom{=====} \forall A \subset \{0,1\}^B\\
&  \underset{\x_f \in \{0,1\}^B}{\max}   \nu_f(\x_f)  = \sigma.
\end{aligned}
\right.
\end{equation}
\begin{lem}\cite[Theorem~3]{connors89}
\label{l:balance_eqs}
For each $\sigma$, the system~\eqref{e:balance_eqs} admits a unique solution. Solutions for different values of $\sigma$ are translates of each other.
\end{lem}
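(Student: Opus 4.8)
The plan is to reduce the statement to existence and uniqueness for a single fixed $\sigma$, and then to obtain the ``translates'' claim essentially for free. The first family of equations in~\eqref{e:balance_eqs} is invariant under the global shift $\nu_f(\cdot)\mapsto \nu_f(\cdot)+c$: both $\nu_f(\x_f)-r_f(\x_f,\z_f)$ and $\nu_f(\z_f)-r_f(\z_f,\x_f)$ increase by $c$, so each cut identity is preserved, while the normalization becomes $\max_{\x_f}\nu_f(\x_f)=\sigma+c$. Hence if $\nu_f$ solves the system for parameter $\sigma$, then $\nu_f+c$ solves it for $\sigma+c$; combined with uniqueness for every parameter value, this shows that the solutions for different $\sigma$ are exactly translates of one another. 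So the whole content is: \emph{for one $\sigma$, a solution exists and is unique}.

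For existence I would exhibit the explicit candidate suggested by Lemma~\ref{l:stationary_distribution}, namely \[ \nu_f^\star(\x_f) \triangleq \sigma - \Big( r_f(\x_f) - \min_{\x'_f} r_f(\x'_f) \Big), \] which already satisfies the normalization since the minimum of $r_f(\x_f)-\min_{\x'_f} r_f(\x'_f)$ over the states is $0$. To check the cut equations I would start from the exact global balance of the finite, irreducible chain $\hat{\X}_f(k)$ at a fixed $q>0$. Summing the per-state balance equations over $\x_f\in A$ and cancelling the internal flows yields the exact cut balance \[ \sum_{\x_f\in A,\, \z_f\in A^c} \pi_{f,q}(\x_f)\,P_{f,q}(\x_f,\z_f) = \sum_{\x_f\in A,\, \z_f\in A^c} \pi_{f,q}(\z_f)\,P_{f,q}(\z_f,\x_f) \] for every proper nonempty $A$. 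Inserting $\pi_{f,q}(\x_f)\propto q^{r_f(\x_f)-\min r_f}$ and $P_{f,q}(\x_f,\z_f)\propto q^{r_f(\x_f,\z_f)}$ and letting $q\to 0$, each side is a sum of powers of $q$ with \emph{strictly positive} coefficients, so its leading exponent is the minimum exponent that appears and no cancellation can occur. Matching the two leading exponents gives \[ \min_{\x_f\in A,\,\z_f\in A^c}\!\big[ (r_f(\x_f)-\min r_f) + r_f(\x_f,\z_f)\big] = \min_{\x_f\in A,\,\z_f\in A^c}\!\big[ (r_f(\z_f)-\min r_f) + r_f(\z_f,\x_f)\big], \] which is precisely the first equation of~\eqref{e:balance_eqs} after multiplying by $-1$ and substituting $\nu_f^\star$. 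Alternatively, one could verify the same identity purely combinatorially by tree surgery on the minimum in-trees, a max-plus version of the Markov chain tree theorem, but the limiting argument is shorter given Lemma~\ref{l:stationary_distribution}.

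The heart of the proof is uniqueness, which I would establish by a maximum-principle argument on the difference of two solutions. Let $\nu_f$ and $\nu_f'$ both solve the system for the same $\sigma$, set $w=\nu_f-\nu_f'$, let $M=\max_{\x_f} w(\x_f)$ and $A=\{\x_f: w(\x_f)=M\}$. If $A=\{0,1\}^B$ then $w\equiv M$, and the two normalizations force $M=0$, so $\nu_f=\nu_f'$. Otherwise $A^c\neq\emptyset$ and, by finiteness, $w(\z_f)\le M-\epsilon$ for all $\z_f\in A^c$, for some $\epsilon>0$. Applying the cut equation for this $A$ to both $\nu_f$ and $\nu_f'$ and comparing: on $A$ one has $\nu_f=\nu_f'+M$ exactly, so the outflow maximum for $\nu_f$ equals $M$ plus that for $\nu_f'$; on $A^c$ one only has $\nu_f=\nu_f'+w$ with $w\le M-\epsilon$, so the inflow maximum for $\nu_f$ is at most $(M-\epsilon)$ plus that for $\nu_f'$. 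Using the balance identities for $\nu_f$ and for $\nu_f'$ to cancel the common $\nu_f'$ maxima, these two estimates collapse to $M\le M-\epsilon$, a contradiction. Hence $A$ is the whole state space and $\nu_f=\nu_f'$.

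The step I expect to be delicate is precisely this uniqueness argument: one must apply the cut identity to the argmax set $A$ and keep careful track of which bound is an exact equality (the outflow side, where $w$ is constant on $A$) versus a strict inequality (the inflow side, where $w$ is bounded strictly below $M$ on $A^c$); the strictness is what produces the $\epsilon$ and the contradiction. A secondary point requiring care is the no-cancellation claim in the existence step, which relies on all stationary and transition probabilities being strictly positive for $q>0$, so that the leading coefficient of each side is a genuine sum of positive terms and the leading exponents are forced to match.
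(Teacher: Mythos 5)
The paper itself does not prove this statement: Lemma~\ref{l:balance_eqs} is imported as a black box from \cite{connors89}, so there is no internal proof to compare yours against, and what you have written is a genuinely self-contained alternative to citing that result. Your argument is correct. The shift-invariance reduction to a single $\sigma$ is exactly right. For existence, plugging $\nu_f^\star(\x_f)=\sigma-\bigl(r_f(\x_f)-\min_{\x'_f}r_f(\x'_f)\bigr)$ into the exact stationary cut balance at $q>0$ and matching leading exponents as $q\to 0$ is legitimate and non-circular, since Lemma~\ref{l:stationary_distribution} is an independent consequence of the Markov chain tree theorem; positivity of all coefficients indeed forbids cancellation. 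Note the bonus: your existence candidate together with your uniqueness step also re-derives Lemma~\ref{l:balance_stoch_stable}, so your route effectively replaces two of the paper's three imported results by the single tree-theorem lemma. The maximum-principle uniqueness argument is the substantive part and it is sound: exact equality of the outflow maxima on the argmax set $A$, the strict bound $w\le M-\epsilon$ on $A^c$, and the two cut identities collapse to $M\le M-\epsilon$. The one point you must make explicit is that this final step subtracts the common quantity $\max_{\x_f\in A,\,\z_f\in A^c}\bigl[\nu'_f(\x_f)-r_f(\x_f,\z_f)\bigr]$ from both sides, which is only valid if that maximum is finite, i.e., if at least one finite-resistance transition crosses the cut from $A$ to $A^c$; otherwise both cut equations read $-\infty=-\infty$ and no contradiction follows. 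This is guaranteed by irreducibility of the transition graph, which holds in the paper's setting (from any $\x_f$ one can climb to the all-ones state by upward transitions and descend one copy at a time), and which the paper explicitly invokes for $q>0$. With that one-line remark added, your proof stands as a complete substitute for the citation.
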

System~\eqref{e:balance_eqs} implicitly determines the set of stochastically stable states:
\begin{lem}
\label{l:balance_stoch_stable}
\cite[Theorem~4]{desai94}
Given $\{\nu_f(\x_f)\}$ the solution of system~\eqref{e:balance_eqs}, 
 it holds:
\[r_f(\x_f) - \min_{\x'_f} r_f(\x'_f) = \sigma - \nu_f(\x_f).\]
\end{lem}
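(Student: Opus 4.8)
The plan is to exhibit an explicit solution of the modified balance equations~\eqref{e:balance_eqs} built directly from the normalized resistances, and then to invoke the uniqueness guaranteed by Lemma~\ref{l:balance_eqs} to conclude that it must coincide with the given solution $\{\nu_f(\x_f)\}$. Concretely, set $\hat r_f(\x_f) \triangleq r_f(\x_f) - \min_{\x'_f} r_f(\x'_f)$ and define the candidate $\nu_f^\star(\x_f) \triangleq \sigma - \hat r_f(\x_f)$. The normalization condition in~\eqref{e:balance_eqs}, namely $\max_{\x_f} \nu_f^\star(\x_f) = \sigma$, holds immediately because $\min_{\x_f}\hat r_f(\x_f) = 0$ by construction. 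If I can also verify the cut equations for $\nu_f^\star$, then by uniqueness $\nu_f = \nu_f^\star$, which is exactly the claimed identity $\sigma - \nu_f(\x_f) = r_f(\x_f) - \min_{\x'_f} r_f(\x'_f)$.

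Substituting $\nu_f^\star(\x_f) = \sigma - \hat r_f(\x_f)$ into the first equation of~\eqref{e:balance_eqs} and using that $\max_i(\sigma - \hat r_f(\x_i) - r_f(\x_i,\z_i)) = \sigma - \min_i(\hat r_f(\x_i) + r_f(\x_i,\z_i))$, the cut equation for a subset $A$ reduces, after cancelling $\sigma$ and the additive constant $\min_{\x'_f} r_f(\x'_f)$ from both sides, to the purely combinatorial identity
\begin{equation}
\label{e:tropical_cut}
\min_{\x_f \in A,\, \z_f \in A^c}\!\big(r_f(\x_f) + r_f(\x_f,\z_f)\big) = \min_{\x_f \in A,\, \z_f \in A^c}\!\big(r_f(\z_f) + r_f(\z_f,\x_f)\big).
\end{equation}
This is the min-plus (tropical) analogue of the statement that, in steady state, the probability flux leaving $A$ equals the flux entering $A$.

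To establish~\eqref{e:tropical_cut} I would pass to the limit $q \to 0$ in the exact cut-balance relation that holds for the finite, irreducible, stationary chain $\hat{\X}_f$ at every $q \in (0,1)$, namely $\sum_{\x_f \in A, \z_f \in A^c} \pi_{f,q}(\x_f) P_{f,q}(\x_f,\z_f) = \sum_{\x_f \in A, \z_f \in A^c} \pi_{f,q}(\z_f) P_{f,q}(\z_f,\x_f)$, obtained by summing the global balance equations over the states of $A$. By Lemma~\ref{l:stationary_distribution} we have $\pi_{f,q}(\x_f) \propto q^{\hat r_f(\x_f)}$, while $P_{f,q}(\x_f,\z_f) \propto q^{r_f(\x_f,\z_f)}$ by the definition of direct resistance, so each side of the cut-balance relation is a finite sum of terms proportional to powers of $q$ with strictly positive leading coefficients. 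As $q \to 0$ each side is dominated by the term with the smallest exponent, hence behaves as $c\, q^{E}$ with $c>0$ and $E$ the corresponding minimal exponent; since the two sides are identically equal for all small $q$, their minimal exponents must agree. Reading off these exponents gives exactly~\eqref{e:tropical_cut}, the common additive constant $\min_{\x'_f} r_f(\x'_f)$ dropping out, which completes the verification.

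The main obstacle I expect is the clean justification of the tropical limit: one must argue that the minimal exponent genuinely governs the asymptotics on each side, which requires the leading coefficients to be strictly positive so that no cancellation of the dominant power occurs, and that exact equality of the two sums for every $q$ forces equality of their leading exponents rather than merely an inequality. Both points follow because transition probabilities and stationary probabilities are strictly positive for $q \in (0,1)$, but they are the crux that turns an ordinary flow-balance identity into its min-plus counterpart. A purely combinatorial alternative, proving~\eqref{e:tropical_cut} directly by surgery on minimum-weight in-trees across the cut $A$, is possible but considerably more delicate, which is why I would favour the limiting argument.
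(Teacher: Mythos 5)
The paper never proves this lemma: it is imported verbatim as Theorem~4 of \cite{desai94} and used as a black box, so the only comparison available is with the cited source, not with an internal argument. Your proof is correct and, within the paper's own toolkit, self-contained: you exhibit $\nu_f^\star(\x_f)=\sigma-\bigl(r_f(\x_f)-\min_{\x'_f}r_f(\x'_f)\bigr)$ as a solution of~\eqref{e:balance_eqs} and conclude by the uniqueness in Lemma~\ref{l:balance_eqs}, after reducing the cut equations for $\nu_f^\star$ to a matching of leading exponents in the exact stationary cut balance $\sum_{\x_f\in A,\,\z_f\in A^c}\pi_{f,q}(\x_f)P_{f,q}(\x_f,\z_f)=\sum_{\x_f\in A,\,\z_f\in A^c}\pi_{f,q}(\z_f)P_{f,q}(\z_f,\x_f)$. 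The three load-bearing steps all hold: the cut balance is exact for every $q\in(0,1)$ (sum global balance over $A$); since every leading coefficient is strictly positive, each side behaves as $C\,q^{E}$ with $C>0$ and $E$ the minimal exponent, so no cancellation can mask the dominant power and equality for all small $q$ forces $E_1=E_2$; and there is no circularity, since Lemma~\ref{l:stationary_distribution} rests on the Markov chain tree theorem and Lemma~\ref{l:balance_eqs} on \cite{connors89}, both logically independent of the statement you are proving. What your route buys is that, inside this paper, the lemma stops being an external citation and becomes a corollary of the other two imported results---it exposes the ``orders of recurrence'' characterization of \cite{connors88,connors89,desai94} as the min-plus shadow of ordinary stationary flow balance. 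What the original development buys instead is independence from the tree theorem: in \cite{desai94} the balance characterization is built directly for annealing-type chains and can then be used to derive facts about stationary distributions, whereas your implication consumes Lemma~\ref{l:stationary_distribution} as an input, so it could not replace the cited theory wholesale. Two details worth making explicit if you write this up: irreducibility for $q>0$ guarantees at least one finite-resistance transition in each direction across every cut, so both minima in your identity are finite; and pairs with zero transition probability (infinite resistance) drop out of both the sums and the minima simultaneously, so they never create a mismatch between the probabilistic and the tropical sides.
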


In particular for our system, we can prove that 
\begin{lem}
\label{l:our_balance}
The function 
\[\phi_f(\x_{f})\triangleq G_f(\x_f) - \vgamma^\intercal \x_f\]
is a solution of system~\eqref{e:balance_eqs} (for a particular value of $\sigma$).
\end{lem}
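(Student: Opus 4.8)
The plan is to verify directly that $\nu_f = \phi_f$ satisfies both lines of system~\eqref{e:balance_eqs}, taking $\sigma = \max_{\x_f}\phi_f(\x_f)$ for the normalization equation so that the ``particular value of $\sigma$'' is pinned down; all the work is in the cut equations. The starting point is a \emph{detailed-balance} identity on neighbouring (parent--child) states. Let $\y_f = \x_f \oplus \mathbf e^{(b)}$, so that the upward resistance is $r_f(\x_f,\y_f)=\vgamma^\intercal \mathbf e^{(b)}=\gamma_b$ while the downward resistance is $r_f(\y_f,\x_f)=\Delta G_f^{(b)}(\y_f)=G_f(\y_f)-G_f(\x_f)$. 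A one-line computation then yields
\[\phi_f(\x_f)-r_f(\x_f,\y_f)=G_f(\x_f)-\vgamma^\intercal\x_f-\gamma_b=\phi_f(\y_f)-r_f(\y_f,\x_f),\]
so that the quantity $w(\x_f,\z_f)\triangleq\phi_f(\x_f)-r_f(\x_f,\z_f)$ is \emph{symmetric}, $w(\x_f,\y_f)=w(\y_f,\x_f)$, on every single-copy pair.

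Given this symmetry, each cut equation would be immediate \emph{if} only single-copy transitions existed. Fix a proper nonempty $A$: every single-copy pair with one endpoint in $A$ and the other in $A^c$ contributes to the left-hand maximum through its $A\!\to\! A^c$ edge and to the right-hand maximum through its $A^c\!\to\! A$ edge the \emph{same} value $w$. The two maxima would then range over identical sets of values and coincide. Connectivity of the hypercube $\{0,1\}^B$ under single-copy edges guarantees that such crossing pairs exist, so neither maximum is empty.

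The one genuine complication is the presence of \emph{multi-copy upward} transitions $\x_f\to\z_f$ with $|\z_f|-|\x_f|\ge 2$, which have no downward counterpart and could upset the balance. The crux of the argument is to show these are always \emph{dominated} and hence irrelevant to the maxima. Suppose such an edge crosses the cut from $A$ to $A^c$. Walking from $\x_f$ up to $\z_f$ one copy at a time, the path must first leave $A$ at some single-copy edge $\s_f\to\s_f'=\s_f\oplus\mathbf e^{(b_j)}$ with $\x_f\le\s_f<\s_f'\le\z_f$, $\s_f\in A$, $\s_f'\in A^c$. Since $w(\x_f,\z_f)=G_f(\x_f)-\vgamma^\intercal\z_f$, one computes
\[w(\s_f,\s_f')-w(\x_f,\z_f)=\bigl(G_f(\s_f)-G_f(\x_f)\bigr)+\bigl(\vgamma^\intercal(\z_f-\s_f)-\gamma_{b_j}\bigr)\ge 0,\]
where the first bracket is nonnegative because $G_f$ is non-decreasing and $\s_f\ge\x_f$, and the second because $\z_f-\s_f\ge\mathbf e^{(b_j)}$ componentwise with $\vgamma>0$. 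Thus the crossing single-copy edge dominates the multi-copy one in the left-hand maximum; swapping the roles of $A$ and $A^c$ handles multi-copy edges crossing $A^c\to A$ and hence the right-hand maximum.

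Having reduced both maxima to single-copy edges, the symmetry of $w$ closes the argument exactly as in the single-copy-only case, establishing every cut equation and thereby the lemma. I expect the domination step to be the main obstacle: it is the only place where the monotonicity of $G_f$ and the positivity of $\vgamma$ are both essential, and it is precisely where the multi-copy upward transitions---the structural feature that distinguishes this chain from a reversible birth--death process on the hypercube---must be controlled.
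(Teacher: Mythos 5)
Your proof is correct and takes essentially the same route as the paper's: the pairwise detailed-balance identity $\phi_f(\x_f)-r_f(\x_f,\y_f)=\phi_f(\y_f)-r_f(\y_f,\x_f)$ on parent--child pairs, combined with the reduction showing that every multi-copy upward edge crossing the cut is dominated by the first single-copy crossing edge along a monotone path, using exactly the same two ingredients (monotonicity of $G_f$ and positivity of $\vgamma$). The only difference is the order of presentation: you establish detailed balance first and domination second, while the paper argues in the reverse order.
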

The proof is in Appendix~\ref{a:our_balance}.

A consequence of Lemma~\ref{l:stationary_distribution}, Lemma~\ref{l:balance_eqs}, Lemma~\ref{l:balance_stoch_stable}, and Lemma~\ref{l:our_balance} is that
\begin{cor}
\label{c:stochastically_stable}
The set of stochastically stable states is the set of global maximizers of $\phi_f(\x_f)$.
\end{cor}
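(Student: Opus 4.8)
The plan is to chain together the four preceding lemmas, since each supplies one link between the abstract notion of stochastic stability and the concrete objective $\phi_f$. First I would recall the characterization already noted right after Lemma~\ref{l:stationary_distribution}. By that lemma the stationary probabilities satisfy $\pi_{f,q}(\x_f) \propto q^{\,r_f(\x_f)-\min_{\x'_f} r_f(\x'_f)}$, so the exponent is nonnegative and vanishes exactly on the states of minimal resistance. Normalizing and letting $q\to 0$, the denominator tends to a strictly positive constant (the sum of the positive proportionality constants over the minimal-resistance states), while the numerator tends to a positive constant if $r_f(\x_f)=\min_{\x'_f} r_f(\x'_f)$ and to $0$ otherwise. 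Hence $\x_f$ is stochastically stable if and only if $r_f(\x_f) - \min_{\x'_f} r_f(\x'_f) = 0$.

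Next I would convert the resistance gap into a gap of $\phi_f$. By Lemma~\ref{l:our_balance}, $\phi_f$ solves the modified balance equations~\eqref{e:balance_eqs} for the value $\sigma = \max_{\x'_f}\phi_f(\x'_f)$, and by Lemma~\ref{l:balance_eqs} the solution of~\eqref{e:balance_eqs} is unique for each $\sigma$, with solutions for different $\sigma$ being translates of one another. Consequently the solution $\{\nu_f(\x_f)\}$ appearing in Lemma~\ref{l:balance_stoch_stable} and our $\phi_f$ differ only by an additive constant, so the quantity $\sigma - \nu_f(\x_f) = \max_{\x'_f}\nu_f(\x'_f) - \nu_f(\x_f)$ is translation-invariant and equals $\max_{\x'_f}\phi_f(\x'_f) - \phi_f(\x_f)$. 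Applying Lemma~\ref{l:balance_stoch_stable} then yields the identity
\[ r_f(\x_f) - \min_{\x'_f} r_f(\x'_f) = \max_{\x'_f}\phi_f(\x'_f) - \phi_f(\x_f). \]

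Finally I would combine the two conclusions: the left-hand side of this display is zero exactly for the stochastically stable states, while the right-hand side is zero exactly when $\phi_f(\x_f) = \max_{\x'_f}\phi_f(\x'_f)$, i.e.\ when $\x_f$ is a global maximizer of $\phi_f$. Equating these two zero-sets gives the corollary. Note that the right-hand side is manifestly nonnegative (as $\phi_f(\x_f)\le \max_{\x'_f}\phi_f(\x'_f)$), consistently with the left-hand side being a resistance gap, so no separate sign check is required.

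The only delicate point, which I would treat explicitly rather than gloss over, is matching the normalization constant $\sigma$ across Lemmas~\ref{l:balance_eqs},~\ref{l:balance_stoch_stable}, and~\ref{l:our_balance}: Lemma~\ref{l:balance_stoch_stable} is stated for the solution normalized to a prescribed $\sigma$, whereas Lemma~\ref{l:our_balance} delivers $\phi_f$ at a possibly different normalization. The resolution is exactly the translation-invariance above, which guarantees that neither the resistance gap nor the set of maximizers depends on which translate is chosen. I do not expect a genuine obstacle in the corollary itself, because all the analytic work is already absorbed into Lemma~\ref{l:our_balance} (the verification that $\phi_f$ satisfies~\eqref{e:balance_eqs}) and into the cited results of~\cite{connors89,desai94}; the corollary is essentially a bookkeeping composition of these facts.
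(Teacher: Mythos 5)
Your proposal is correct and takes essentially the same approach as the paper: the paper offers no more detail than the assertion that the corollary is a consequence of Lemmas~\ref{l:stationary_distribution}, \ref{l:balance_eqs}, \ref{l:balance_stoch_stable}, and~\ref{l:our_balance}, and your argument is precisely the intended composition of these facts (stochastic stability $\Leftrightarrow$ zero resistance gap, then resistance gap $=$ $\max_{\x'_f}\phi_f(\x'_f) - \phi_f(\x_f)$ via uniqueness and translation-invariance of solutions to~\eqref{e:balance_eqs}). Your explicit handling of the normalization constant $\sigma$ is a point the paper leaves implicit, and it is resolved exactly as you say.
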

}

For each content $f$ we are then able to characterize which configurations are stochastically stable as $q$ converges to $0$.

\subsection{Optimality proof}
\label{s:opt_proof}
{
We now consider the continuous relaxation of the optimization problem~\eqref{e:static_opt_gen}:
\begin{align}
\label{e:relaxed_opt}
& \maxim_{\{\alpha_f(\x_f)\}}
& & \sum_{f=1}^F \sum_{\x_f \in \{0,1\}^B} \alpha_f(\x_f) G_f(\x_f) \\ 
& \text{subject to}
& & \sum_{f=1}^F \sum_{\x_f \in \{0,1\}^B} \alpha_f(\x_f) x_f^{(b)} =C, \;\;\; \forall b \in [B] \nonumber\\ 
& & & \sum_{\x_f \in \{0,1\}^B} \alpha_f(\x_f) = 1, \;\;\;\;\;\;\;\;\;\;\;\;\;\;\; \forall f \in [F] \nonumber\\
& & & \alpha_f(\x_f) \ge 0, \;\;\;\;\;\;\;\;\;\;\;\;\;\;\; \forall f \in [F], \forall b \in [B]. \nonumber 
\end{align}
The optimization problem~\eqref{e:static_opt_gen} corresponds to the particular case, where we require that, for each $f \in [F]$, there exists a single state $\x_f$ with $\alpha_f(\x_f) = 1$ and  $\alpha_f(\x'_f) = 0$ for each $\x'_f \neq \x_f$. As the feasible set of the relaxed problem~\eqref{e:relaxed_opt} includes the feasible set of problem~\eqref{e:static_opt_gen}, the optimum value of problem~\eqref{e:relaxed_opt}  is at least as large as the optimal value of problem~\eqref{e:static_opt_gen}.

Note how the capacity constraint in problem~\eqref{e:relaxed_opt} is similar to the relaxed constraint considered by the CTA (see~\eqref{e:che_multi_cache}). 
This suggests that the stationary probabilities $\pi_f(\x_f)$ will play the role of the coefficients $\alpha_f(\x_f)$.

}

Now we can state formally our result.
\begin{repprop}{p:qlrud_convergence_general}
Under characteristic time and exponentialization approximations, let $\{\gamma_b, b=1, \dots, B\}$ be the constants in~Lemma~\ref{l:asymptotic_rates}. Consider the spatial network of \qlrud{} caches, where cache $b$ selects the parameter $q^{(b)} = q^{\gamma_b}$. {As $q$ converges to $0$, the stationary probabilities $\{\pi_{f,q}(\x_f), f \in [F], \x_f \in \{0,1\}^B\}$ converge to an optimal solution of Problem~\eqref{e:relaxed_opt}.}
\end{repprop}
{
The proof is in Appendix~\ref{a:proof_qlrud}. It relies on the characterization of stochastically stable states in Corollary~\ref{c:stochastically_stable} and on studying problem~\eqref{e:relaxed_opt} using the method of Lagrange multipliers.
}

\section{Case studies}
\label{s:case_studies}
As we discussed, \qlrud{} can be made to optimize different utility functions $G_f(\cdot)$. In this section we illustrate two specific case studies: hit rate maximization, and delay minimization with CoMP techniques. We first describe what form the general \qlrud{}  assumes in these cases and then illustrate with some experiments the convergence result in Proposition~\ref{p:qlrud_convergence_general}. 

\subsection{Hit rate maximization}
The gain is simply $1$ from a hit and $0$ from a miss,~i.e.,
\[g_f(\X_f,u) = \mathbbm 1(J_{u,f} \neq \emptyset),\]
where $\mathbbm 1(\cdot)$ denotes the indicator function.
According to~\eqref{e:update} with $\beta =1$, each BS $b$ with a local copy ($b \in J_{u,f}$) moves the content to the front of the cache with probability
\begin{align*}
p_f^{(b)}(u) & = \Delta g_f^{(b)}(\X_f(t),u) \\
	&= \mathbbm 1(J_{u,f} \neq \emptyset) - \mathbbm 1(J_{u,f} \setminus \{b\} \neq \emptyset)\\
	& = 1 - \mathbbm 1(J_{u,f} \setminus \{b\} \neq \emptyset)\\
	& = \mathbbm 1(J_{u,f} \setminus \{b\} = \emptyset) = \mathbbm 1(J_{u,f}=  \{b\}),
\end{align*}
where the third equality is due to the fact that $b \in J_{u,f}$.
Similarly, from~\eqref{e:miss}, at least one of the BSs without the content (i.e., those in $I_{u,f} \setminus J_{u,f}$) decides to store an additional copy of $f$ with probability
\begin{align*}
q_f^{(b)}(u) & = q \mathbbm 1(J_{u,f} = \emptyset).
\end{align*}
The policy then works as follows. Upon a miss ($J_{u,f}=\emptyset)$, at least one cache decides to retrieve the content with probability $q$.
Upon a hit ($J_{u,f}\neq \emptyset$), the cache serving the content brings it to the front if and only if no other cache could have served it (i.e.,~$|J_{u,f}| = 1$). 

{Note that in order to compute $p_f^{(b)}$ and $q_f^{(b)}$ cache~$b$ simply needs to know the size of $J_{u,f}$. The system can then operate as follows: the user broadcasts a query for content $f$, discovers  $J_{u,f}$ (which BSs have a copy of the content) and piggyback this information when querying the specific BS from which it wants to retrieve the content}.

This policy is a slight extension of \qlrulazy{} proposed in~\cite{leonardi18jsac}. The only minor difference is that under \qlrulazy{} only one cache retrieves the contents upon a miss. \qlrud{} allows for some additional flexibility. In what follows, we consider that each cache decides independently to retrieve the copy (and then multiple copies of the same content can be retrieved).

\subsection{Delay minimization with CoMP}
Let $h_{b,u}$ denote the signal-to-noise ratio (SNR) of the wireless channel between BS $b$ and user $u$. We assume for simplicity that $\{h_{b,u}, b \in I_u\}$ are i.i.d. random variables with expected value $h$, and we consider $h_{b,u}=0$, when $u$ is not reachable by the BS $b$ ($b\notin I_u$). 
We consider BSs can employ a coordinated transmission technique. In particular, BSs in $J_{u,f}$ can cooperate to transmit the  file to $u$, and we assume they are able to achieve the aggregate channel capacity $C\!\left(\sum_{b \in J_{u,f}} h_{b,u}\right)\triangleq W \log_2(1+\sum_{b \in J_{u,f}} h_{b,u})$, where $W$ is the channel bandwidth \cite{tse2005fundamentals, ao15}.

Upon a miss, the content needs to be retrieved from a base station $b^* \in I_u$, selected uniformly at random, and then transmitted from $b^*$ to $u$.\footnote{
	It is possible to consider more complicated schemes, e.g.,~where the $u$ retrieves from the BS with the highest SNR.
}
The user then experiences a delay equal to the backhaul delay (denoted as $d_B$) plus the transmission delay $M/ C( h_{b^*,u})$, where $M$ is the size of the content.

Upon a hit, the delay is instead equal to 
\begin{align}
\frac{M}{C\!\left(\sum_{b \in J_{u,f}} h_{b,u}\right)} & = \frac{M}{C\!\left(\sum_{b \in I_{u}} h_{b,u} X_f^{(b)}\!(t)\right)}\\
	& = \frac{M}{C\!\left(\sum_{b} h_{b,u} X_f^{(b)}\!(t)\right)}
\end{align}
Summing up, the delay experienced by user $u$ requesting file $f$ is
\begin{align*}
	d_f(\X_f(t), u)& =
	\begin{cases}
		d_B + \frac{M}{C( h_{b^*,u})}, & \textrm{if }J_{u,f}= \emptyset,\\
		\frac{M}{C\left( \sum_{b} h_{b,u} X_f^{(b)}\!(t)\right)}, & \textrm{otherwise.}
	\end{cases}
\end{align*}

The total expected delay per request is then 
\begin{equation}
	D_f(\X_f(t)) = \sum_{u=1}^U \lambda_{f,u} \E\left[d_f(\X_f(t),u)\right],
\end{equation}
when the set of users is finite, and
\begin{equation}
	D_f(\X_f(t)) = \int_R \lambda_{f}(\mathbf r) \E\left[d_f(\X_f(t),\mathbf r)\right] \mu(\mathbf r) \textrm d \mathbf r,
\end{equation}
when a potentially unbounded set of users is distributed over the region (see Sect.~\ref{s:network_model}).

We want to minimize the delay $D_f(\x_f)$. In order to frame this goal according to our reference maximization problem~\eqref{e:static_opt_gen}, we can simply consider
$G_f(\x_f) \triangleq d_{\max} - D_f(\x_f)$, where $d_{\max}$ is a bound on the retrieval time, e.g.,~equal to the sum of the backhaul delay and the maximum delay on the transmission channel. Similarly, we consider $g_f(\x_f,u) \triangleq d_{\max} - d_f(\x_f,u)$. Note that 
\begin{align*}
& \Delta g_f^{(b)}(\x_f,u)     \\
	& =(d_{\max} - d_f^{(b)}(\x_f,u))\\
 	 & \;\;\;\;\; - (d_{\max} - d_f^{(b)}(\x_f\ominus \mathbf e^{(b)},u)) \nonumber\\
	& =  d_f^{(b)}(\x_f\ominus \mathbf e^{(b)},u)  - d_f^{(b)}(\x_f,u)\\
	& = \begin{cases}
		 d_B + \frac{ M}{c\left( h_{b^*,u}\right)} - \frac{ M}{c\left( h_{b,u}\right)},  \hspace{0.5cm}\textrm{        if }J_{u,f}= \{b\},\\
		\frac{ M}{C\left( \sum_{b'\neq b} h_{b',u} X_f^{(b')}\right)} - \frac{M}{C\left( \sum_{b'} h_{b',u} X_f^{(b')}\right)},  \hspace{0.2cm}\textrm{o/w.}
	\end{cases}
\end{align*}
Note that $d_{\max}$ cancels out and then the choice of its value is irrelevant for the  algorithm.

Remember from our discussion at the end of Sect.~\ref{s:operation} that it is possible to replace $\Delta g_f^{(b)}$ in~\eqref{e:update} with any other function with the same expected value. Given that $h_{b,u}$ and $h_{b^*,u}$ are identically distributed, we can have each BS $b$ with a local copy ($b \in J_{u,f}$) move the content to the front of the cache with probability
 \begin{align*}
& p_f^{(b)}(u)  = \\
	& = \begin{cases}
		\beta d_B,  \hspace{4.9cm}\textrm{        if }J_{u,f}= \{b\},\\
		\frac{\beta M}{C\left( \sum_{b'\neq b} h_{b',u} X_f^{(b')}\right)} - \frac{\beta M}{C\left( \sum_{b'} h_{b',u} X_f^{(b')}\right)},  \hspace{0.2cm}\textrm{o/w.}
	\end{cases}
\end{align*}
Similarly, from~\eqref{e:miss}, at least one of the BSs without the content (i.e., those in $I_{u,f} \setminus J_{u,f}$) decides if storing an additional copy of $f$ with probability
\begin{align*}
& q_f^{(b)}(u)  = \\
	& = \begin{cases}
		q \delta d_B,  \hspace{4.9cm}\textrm{        if }J_{u,f}= \emptyset,\\
		\frac{q \delta M}{C( \sum_{b'} h_{b',u} X_f^{(b')})} - \frac{q \delta M}{C( h_{b,u}+\sum_{b'} h_{b',u} X_f^{(b')}))},  \hspace{0.2cm}\textrm{o/w.}
	\end{cases}
\end{align*}
As above, we consider that each cache decides independently to retrieve an additional copy.

{Similarly to what discussed above, the user can piggyback to its request the measured SNRs values from all the BSs in its transmission range ($I_u$). This information allows BS $b$ to compute $p_f^{(b)}$ and  $q_f^{(b)}$.}

\begin{figure}[t]
   \centering
   \includegraphics[width=\myFigureScale\linewidth]{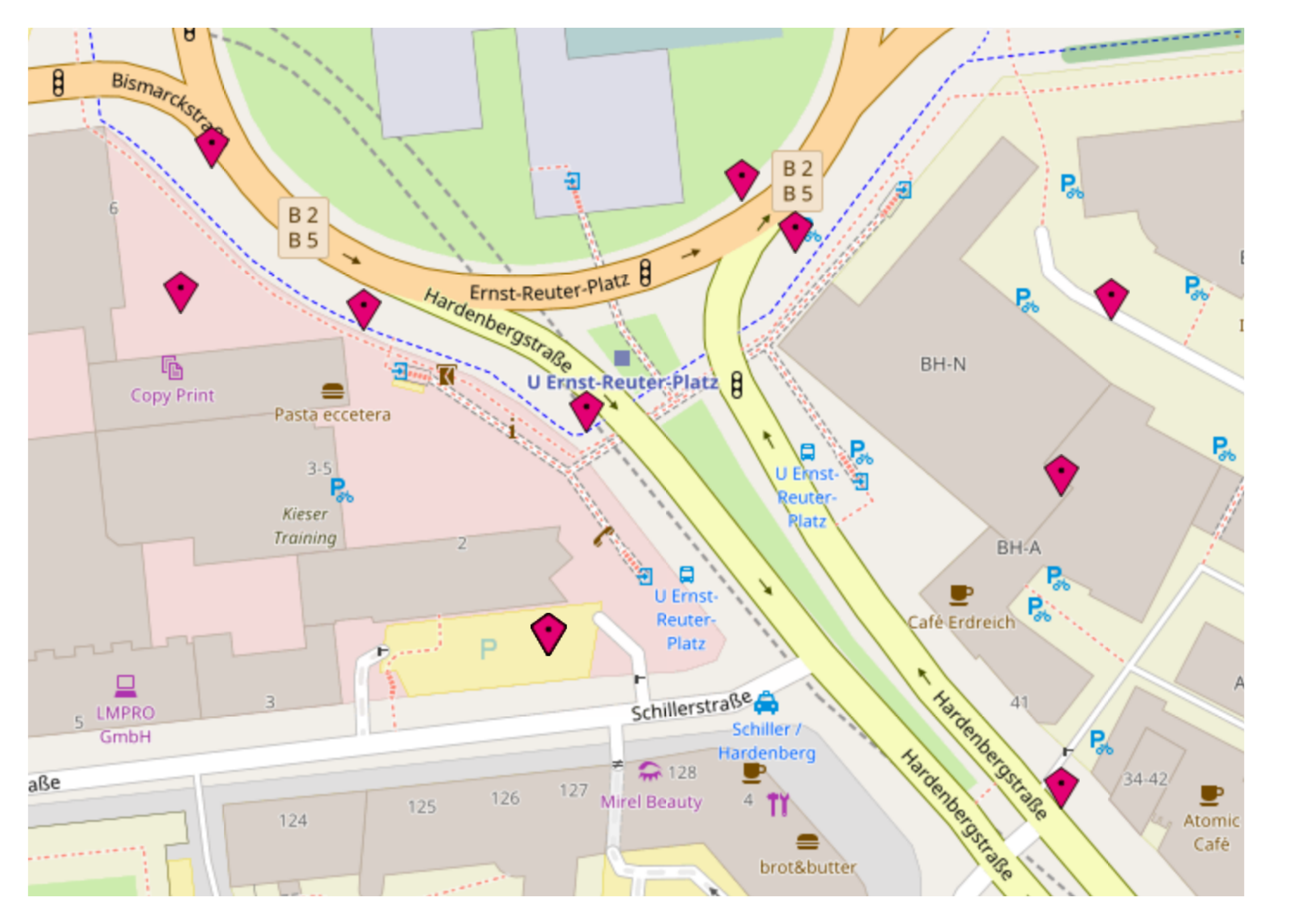}
   \caption{T-Mobile BS configuration in Berlin.} 
   \label{f:berlin}
\end{figure}

\subsection{Numerical Results}

{In our simulations} we consider a topology where $B=10$ base stations are located according to the positions of T-mobile base stations in Berlin extracted from~\cite{bs_dataset}. The BS locations are indicated in Fig.~\ref{f:berlin}. We assume their transmission range is 150m, and spatial user density to be homogeneous, so that each user on average is covered by 5.9 BSs. SNRs have constant values $h_{b,u}=10$dB, the channel bandwidth is $W=5.0$MHz,  and backhaul access delay is $d_B=100$ms. The catalog counts $F=10^6$ files with size $M=10^6$~bits, whose popularity distribution follows a Zipf law with exponent $\alpha=1.2$. Each BS has a local cache with capacity $C=100$~files{, unless otherwise stated.}

We show the performance of \qlrud, when it is configured to maximize the hit rate and when to minimize the delay rate. In the figures we refer to the two cases as \textproc{qLRU-$\Delta h$} and \textproc{qLRU-$\Delta d$}. For \textproc{qLRU-$\Delta d$},
{ we set $\beta$ and $\delta$ equal  to the minimum value that guarantees respectively $p_f^{(b)}(u)\le 1$ and $q_f^{(b)}(u)\le q$ for every possible state of the cache $\x_f$}.

We would like to compare their performance with the corresponding optimal {offline} allocations. Unfortunately, both corresponding optimization problems are NP-hard, but the greedy algorithm has a guaranteed $1/2$-approximation ratio for hit ratio maximization~\cite{shanmugam13} and for delay minimization~\cite{tuholukova17}.\footnote{
	Precisely, the greedy static allocation achieves at least $1/2$ of the delay savings achievable by the best possible static allocation.
} We then consider the corresponding {offline} allocations as baselines and denote them respectively as \greedyh{} and \greedyd. Note that the greedy algorithm requires complete knowledge of the network and of content popularities, while \qlrud{} has no such information.

{Additionally, we provide the results for the simulation of two other online policies: \qlru{} and \fifo{}. Both policies maintain the contents in the cache as an ordered list with insertions occurring at the front of the list and evictions at the rear.  In \qlru{}, the requested content is inserted with probability $q$ upon a miss and moved to the front upon a hit. Note that \qlru{} with $q=1$ coincides with \lru. In \fifo, the requested content is always inserted upon a miss and maintains its position upon a hit.} 

In all our experiments, policies' simulations have a warm up phase and a measurement phase each consisting of $10^8$ requests. 

\begin{figure}[h]
         \centering
         \includegraphics[width=0.42\textwidth]{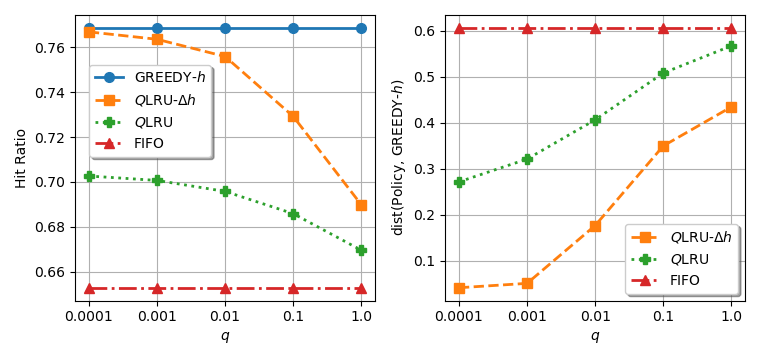}
        \caption{{Comparison of {online} policies and \greedyh: hit rate (left) and distance of their allocations (right) versus $q$.}}
         \label{fig:hit_rate}
\end{figure}

\begin{figure}[h]
         \centering
         \includegraphics[width=0.42\textwidth]{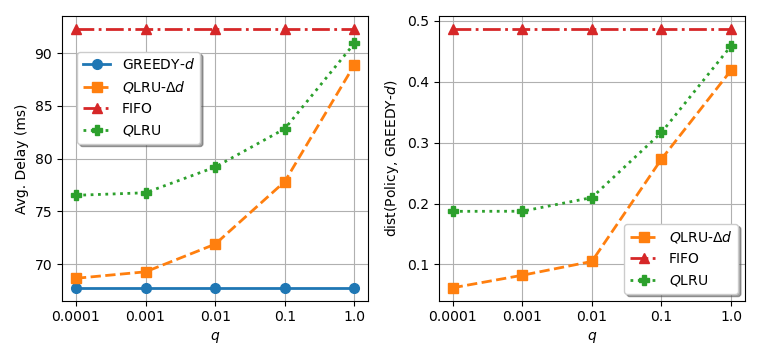}
        \caption{{Comparison of {online} policies and \greedyd: average delay (left) and distance of their allocations (right) versus $q$.}}
         \label{fig:delay}
\end{figure}

Figure~\ref{fig:hit_rate} (left) shows the hit rate achieved by \mbox{\greedyh} and by \qlrudh{} for different values of~$q$. As $q$ decreases, \qlrudh's hit rate converges to that of \greedyd. 
{The hit rate of \qlru{} also improves for smaller $q$. For a single cache, \qlru{} coincides with \qlrudh{} and it is then implicitly maximizing the hit rate when $q$ converges to $0$. But in a networked setting, the deployment of \qlru{} at each cache does not perform as well because each cache is myopically maximizing its own hit rate without taking into account the presence of the other ones. Instead, \qlrudh{} correctly takes into account the marginal contribution the cache can bring to the whole system. Finally, \fifo{} achieves the lowest hit rate as the sojourn time of each content inserted in the cache is roughly the same, independently from its popularity.}

We also compare how different the content allocations of~\qlrudh, \qlru, and \fifo{} are from the allocation of~\mbox{\greedyh}. To this purpose, we define the \emph{occupancy vector}, whose component $i$ contains the number of copies of  content $i$ present in the network averaged during the measurement phase. We then compute the cosine distance\footnote{
The cosine distance between vectors $u$ and $v$ is given by $\text{dist}(u,v) = 1 - \frac{\langle u, v \rangle}{\lVert u \rVert_2 \lVert v \rVert_2}$, where $\langle\cdot,\cdot\rangle$ denotes the inner product.
} of the occupancy vectors of  the specific {online} policy and \greedyh.
{Figure~\ref{fig:hit_rate} (right) shows how such  distance decreases as $q$ decreases, indicating that the files \greedyh{} stores tend to be cached longer and longer under \qlrudh, and partially under \qlru. The allocations of \fifo{} and \greedyh{} are instead quite far. }

Figure~\ref{fig:delay} shows the corresponding results for \greedyd{} and \qlrudd. 
The conclusion is the same: as $q$ decreases \qlrud{} improves the metric of interest (the delay in this case) achieving performance comparable to those of the optimal {offline} greedy allocation {and outperforming existing policies like~\qlru{} and \fifo.}



\begin{figure}[h]
         \centering
         \includegraphics[width=0.32\textwidth] {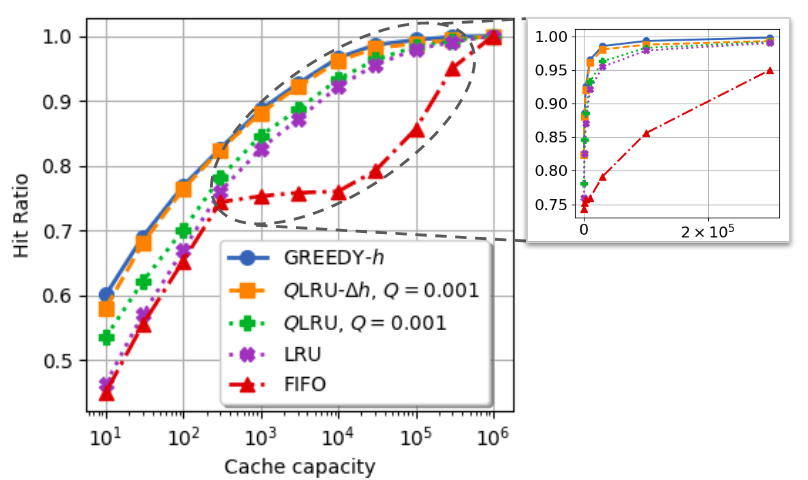}
         \caption{{Comparison of {online} policies and $\greedyh$: hit ratio versus cache capacity.}}
         \label{fig:capacity_ratio}
\end{figure}

\begin{figure}[h]
         \centering
         \includegraphics[width=0.32\textwidth] {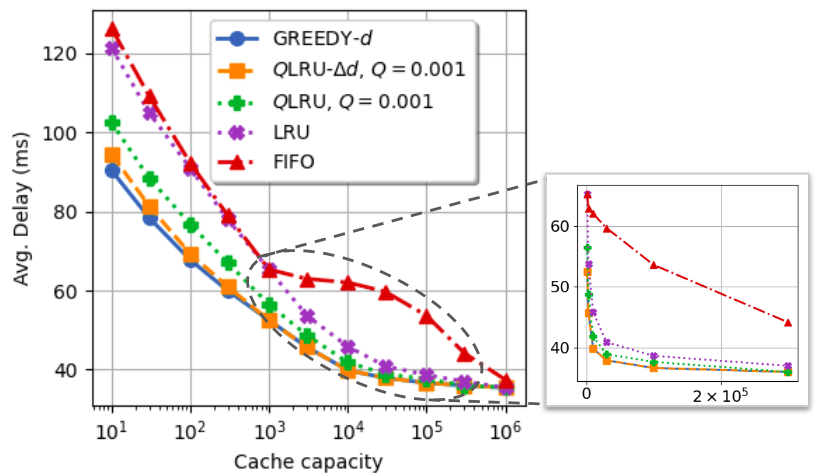}
         \caption{{Comparison of {online} policies and $\greedyd$: average delay versus cache capacity.}}
         \label{fig:capacity_delay}
\end{figure}

{Figures~\ref{fig:capacity_ratio} and~\ref{fig:capacity_delay} show the hit ratio and average delay, respectively, of {online} policies and greedy algorithms as we increase the cache capacity per BS. We fix $q=0.001$ for \qlrud{} and \qlru{}. In both scenarios, \qlrud{} outperforms all other {online} policies and it closely follows the result of the corresponding greedy policy. Note that the strange shape of  \fifo{} curves is an artefact of the semi-log graph as shown by the inserts.} 

{Additionally, we have carried out additional experiments with different catalog size and popularity distributions,
these results are  qualitatively very similar to the results already reported.}

If some knowledge about content popularity is available, it can be exploited to determine the initial content to allocate in the caches using the offline greedy algorithms, i.e., \greedyh{} and \greedyd{} when the metric of interest is the hit ratio or the delay, respectively. We show through an experiment in Fig.~\ref{fig:frame_pop} that \qlrud{} can modify the initial cache configuration and improve performance. The left figure considers the hit ratio as objective, the right one the delay.
The ground truth popularity follows a Zipf distribution with $\alpha = 1.2$ (as in the previous experiments) and noisy popularity estimations are available: they are obtained multiplying true popularities by random values from a log-normal distribution with expected value $1.0$ and variance $e^{\sigma^2} - 1$ ($\sigma^2$ is the variance of its logarithm). If $\sigma^2=0$, estimated popularity values coincide with the true ones, but  the larger the variance $\sigma^2$, the less accurate the estimations. 

The horizontal dashed lines indicate the performance of the corresponding initial cache configuration under the true request process. The solid curves show the performance over time of \qlrudh{} (left) and \qlrudd{} (right) with $q=10^{-3}$. We observe that the curves converge to the same value, that is slightly worse than the initial one when popularity estimations are exact ($\sigma^2=0$), but better in all other cases. This result shows that \qlrud{} can effectively improve performance even when  popularity estimates are available. Interestingly, one may expect that the time needed for \qlrud{} to reach the steady state performance depends on the accuracy of the initial popularity estimates (the more accurate, the less changes would be needed to reach the final cache allocation), but the dependence, if present at all, is very small.

We remark that available popularity information could also be used also to tune \qlrud's parameters to speed-up the transient. For example, we can modify~\eqref{e:miss} to favor the contents the greedy algorithm would have put in the cache. 
This change is in the same spirit of introducing the factor $\Delta g_f^{(b)}(\X_f(t) \oplus \mathbf e^{(b)},u)$ in~\eqref{e:miss}. As we discuss at the end of Section~\ref{s:operation}, these changes likely improve convergence speed, but do not affect the steady-state and then \qlrud's optimality guarantees.

\begin{figure}[t]
         \centering
         \includegraphics[width=0.5\textwidth] {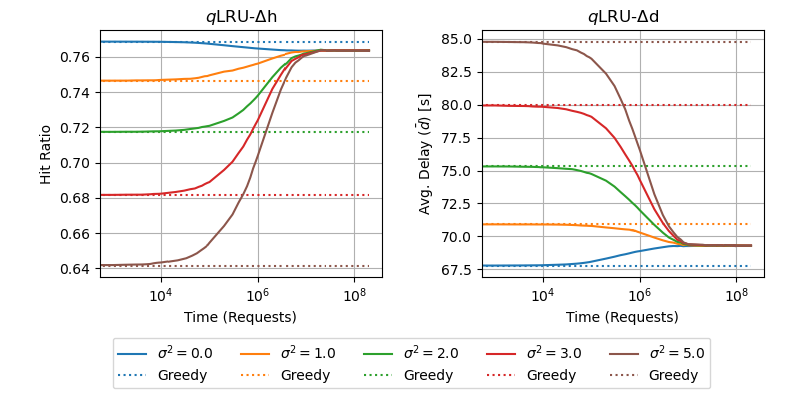}
         \caption{Convergence of \qlrudh{} (left) and \qlrudd{} (right) starting the simulation 	with the respective greedy allocation for different accuracy of popularity estimation, quantified by the variance~$\sigma^2$. The solid curves are the average of $100$ different simulation rounds.}
         \label{fig:frame_pop}
\end{figure}

\section{Discussion and conclusions}
\label{s:conclusions}
In this paper, we have introduced \qlrud, a general-purpose caching policy that can be tuned to optimize different performance metrics in a dense cellular network. Recently~\cite{garetto20}, we discovered that the same approach can be applied to a different application scenario, i.e., similarity caching systems, in which a user request for an object $o$ that is not in the cache can be (partially) satisfied by a similar stored object $o'$, at the cost of a loss of user utility. This cost can be expressed as function of the set of objects currently stored in the cache, similarly to how in this paper the gain is a function of the set of BSs storing the content.

Under stationary request processes, the smaller $q$ is, the better \qlrud{} performs.
When content popularities and/or user densities vary over time, the caching policy may react too slowly to changes if $q$ is small. A detailed experimental evaluation in~\cite{leonardi18jsac} using real traces from Akamai suggests that the sweet-spot is for $q$ values between $0.01$ and $0.1$, that achieve a good tradeoff between convergence speed and performance. A practical alternative to make the policy more reactive is to use a virtual cache. The virtual cache only stores content ids and it is managed independently from the physical cache, e.g., through a \lru{} policy. Upon a miss at a physical cache, the content is stored there if and only if its id is present in the virtual cache. Upon hits, the policy updates the state of the cache exactly as \qlrud. Under stationary request traffic, a miss for content $i$ leads to an insertion with probability $1-e^{-\lambda_i T_{c,v}}$, where $T_{c,v}$ is the characteristic time of the virtual cache. The virtual cache can be seen as an alternative way to implement a probabilistic insertion (at the cost of introducing a popularity-bias), achieving small insertion probabilities when the virtual cache (and then $T_{c,v}$) is small. At the same time, two close requests for a content cause it to be placed immediately in the physical cache, while \qlrud{} would store it on average after $1/q$ requests. This variant reacts faster and it is then more suited for non-stationary settings. 

\qlrud{} responds to hits in a binary way: the content is moved to the front or maintained in the same position. The dynamic performance of the policy may probably be improved by introducing a list-based variant~\cite{gast15}, where the cache is organized in a number of ranked lists and a content is promoted to a higher-priority list upon a hit. The marginal gain of the copy can affect the probability of the content to be randomly promoted to the next list, or the number of lists the content advances by. 

Another interesting research direction is to extend \qlrud{} to operate with heterogeneous content sizes. This can be probably achieved by making the update probability inversely proportional to the content size, similarly to what done in~\cite{neglia18ton}.  

This work was partly funded by the French Government (National Research Agency, ANR) through the ``Investments for the Future'' Program reference \#ANR-11-LABX-0031-01.

\IEEEpeerreviewmaketitle

\bibliographystyle{IEEEtran}

\bibliography{caching.bib}

\appendices

\section{Proof of Lemma~\ref{l:our_balance}}
\label{a:our_balance}
 \begin{proof}
Consider the function:
\begin{equation}
\label{e:potential2}
\phi_f(\x_{f})\triangleq G_f(\x_f) - \vgamma^\intercal \x_f.
\end{equation}
We show that $\{\phi_f(\x_{f})\}$ is a solution of the system~\eqref{e:balance_eqs} (for a particular value of $\sigma$). 

To this purpose, for a given choice of the set $A$, we need to evaluate $\underset{\x_f \in A, \z_f \in A^c}{\max}   \phi_f(\x_f)    -   r_f(\x_f,\z_f) $. 
We start proving that the maximum is always achieved by a pair of parent-child nodes. In particular, we show that for any two states $\hat{\x}_f \in A$ and $\hat{\z}_f \in A^c$  with $r_f(\hat{\x}_f,\hat{\z}_f) < \infty$ and $|\hat{\z}_f| > |\hat{\x}_f|+1$, (which imply that $\hat{\z}_f $ is an ancestor of $\hat{\x}_f$), there exist two states $\x'_f \in A$ and $\y'_f \in A^c$, with~$\y'_f$ parent of $\x'_f$ and  
\begin{equation}
\label{e:higher_transitions}
\phi_f(\hat\x_f) - r_f(\hat \x_f,\hat \z_f) \le \phi_f(\x'_f) - r_f(\x'_f,\y'_f).
\end{equation}
Consider a path from $\hat{\z}_f$ to $\hat{\x}_f$ that traverses states with strictly smaller weight (it is obtained setting progressively to zero the elements that are equal to one in $\hat{\z}_f$, but not in $\hat{\x}_f$). One of the edges of this path necessarily goes from a state in $A^c$ to a state in $A$. These two states are respectively $\y_f'$ and $\x'_f$. In fact, 
\begin{align*}
&\phi_f(\hat{\x}_f) - r_f(\hat{\x}_f,\hat{\z}_f) = \\
&\quad = G_f(\hat{\x}_f) - \vgamma^\intercal \hat{\x}_f - \vgamma^\intercal (\hat{\z}_f - \hat{\x}_f)\\
&\quad	= G_f(\hat{\x}_f) - \vgamma^\intercal \hat{\z}_f\\
&\quad	\le G_f(\x'_f) - \vgamma^\intercal \hat{\z}_f\\
&\quad	=  G_f(\x'_f) - \vgamma^\intercal \x'_f  - \vgamma^\intercal (\y'_f - \x'_f) - \vgamma^\intercal (\hat{\z}_f - \y'_f)\\
&\quad	=  \phi_f(\x'_f) - r_f(\x'_f, \y'_f)  - \vgamma^\intercal (\hat{\z}_f - \y'_f)\\
&\quad  \le \phi_f(\x'_f) - r_f(\x'_f, \y'_f), 
\end{align*}
where the first inequality follows from the monotonicity of $G_f(\cdot)$, and the second from the fact that $\z_f$ is an ancestor of $\y'_f$ and then $\hat{\z}_f - \y'_f$ is a vector with non-negative elements. 

In addition note that by construction
$r(\hat{\z}_f,\hat{\x}_f)=\infty$ (i.e. given two states $\z_f$ and $\x_f$ with $|\z_f|>|\x_f|$, we have $r(\z_f,\x_f)<\infty$ only if  $\x_f$ is a child of $\z_f$).
 
As a consequence we have  that $\{\phi(\x_f)\}$ is a solution of system \eqref{e:balance_eqs}, if and only if it is a solution of 
\begin{equation}
\label{e:balance_eqs2}
\left\{
\begin{aligned}
&  \underset{\begin{subarray}{c}\x_f \in A, \z_f \in A^c,\\ |\z_f| = |\x_f| \pm 1\end{subarray}}{\max}   \nu_f(\x_f)    -   r_f(\x_f,\z_f)  =  \\
& \quad = \underset{\begin{subarray}{c}\x_f \in A, \z_f \in A^c,\\ |\z_f| = |\x_f| \pm 1\end{subarray}}{\max}    \nu_f(\z_f) - r_f(\z_f,\x_f), \forall A \subset \{0,1\}^B\\
&  \underset{\x_f \in \{0,1\}^B}{\max}   \nu_f(\x_f)  = \sigma.
\end{aligned}
\right.
\end{equation}
We can then limit ourselves to check if $\phi_f(\cdot)$ satisfies the aggregate balance equations considering only the parent-child pairs.
We prove a stronger relation, i.e., that for every parent-child pair, $\phi_f(\cdot)$ satisfies a pairwise balance equation. In fact, for every $\x_f$ and $\y_f$ with $\y_f\!= \!\x_f\!\oplus\!\mathbf e^{(b_0)}$ and  parent of $\x_f$:
\begin{align*}
\phi_f(\x_f) - r_f(\x_f,\y_f) & = G_f(\x_f) - \vgamma^\intercal \x_f - \vgamma^\intercal (\y_f - \x_f)\\
	& = G_f(\x_f) - \vgamma^\intercal \y_f \\
	& =  G_f(\y_f) - \Delta G_f^{(b_0)}(\y_f) - \vgamma^\intercal \y_f\\
	& =  \phi_f(\y_f) - r_f(\y_f, \x_f).
\end{align*}
It follows that $\{\phi(\x_f)\}$ is a solution of~system~\eqref{e:balance_eqs}.
\end{proof}

\section{Proof of Proposition~\ref{p:qlrud_convergence_general}}
\label{a:proof_qlrud}
\begin{proof}
From Corollary~\ref{c:stochastically_stable} a state $\x_f$ is stochastically stable if and only if it is a global maximizer of $\phi_f(\cdot)$, i.e., $\lim_{q\to 0 } \pi_{f,q}(\x_f) >0 $ if and only if $\x_f$ is a maximizer of~$\phi_f(\cdot)$.

Let $\pi_{f,0^+} (\x_f) \triangleq \lim_{q\to 0 } \pi_{f,q}(\x_f) $  denote the limit of the probability distribution. We are now going to prove that the $\{\pi_{f,0^+}(\x_f), f \in [F], \x_f \in {0,1}^B \}$ is an optimal solution for problem~\eqref{e:relaxed_opt}. 

Problem~\eqref{e:relaxed_opt} is a convex problem. We can consider its Lagrangian function:
\begin{equation}
\begin{aligned}
&L(\boldsymbol \alpha,  \boldsymbol \chi, \boldsymbol \zeta) = - \sum_{f=1}^F \sum_{\x_f \in \{0,1\}^B} \alpha_f(\x_f) G_f(\x_f) \\
&\quad   + \sum_{b=1}^B \chi_b \left( \sum_{f=1}^F \sum_{\x_f \in \{0,1\}^B} \alpha_f(\x_f) x_f^{(b)} - C \right) \\
&\quad + \sum_{f=1}^F \zeta_f \left( \sum_{\x_f \in \{0,1\}^B} \alpha_f(\x_f) - 1 \right),
\end{aligned}
\end{equation}
where $\boldsymbol \alpha $ denotes the $F 2^B$ vector of problem variables, $\boldsymbol \chi$ denotes the $B$ vector of Lagrange multipliers relative to the capacity constraints, and $\boldsymbol \zeta$ denotes the $F$ vector of Lagrange multipliers relative to the total mass to assign to each file.

A vector $\boldsymbol \alpha^* $ is a (global) maximizer of problem~\eqref{e:relaxed_opt}, if there are vectors $\boldsymbol \chi^*$ and $\boldsymbol \zeta^*$ such that~\cite[Thm.~3.4.1]{bertsekas99nonlinear}
\begin{align*}
1)\;\; &  \boldsymbol \alpha^*  \textrm{ is feasible,}\\
2)\;\;  & \nabla  L(\boldsymbol \alpha^*, \boldsymbol \chi^*, \boldsymbol \zeta^*)^\intercal \left(\boldsymbol \alpha - \boldsymbol \alpha^*  \right)\ge 0, \;\; \forall \boldsymbol \alpha \ge \boldsymbol 0.
\end{align*}

We show that the following assignments satisfy the set of conditions indicated above
\begin{align*}
	 \alpha^*_f(\x_f) &  =  \pi_{f,0^+}(\x_f), & &  \forall f \in [F], \x_f \in \{0,1\}^B,\\
	\chi^*_b   & =    \gamma_b, & &   \forall b \in [B],\\
	\zeta_f^*  & = \max_{\x'_f \in \{0,1\}^B} \phi_f(\x'_f), & & \forall f \in [F].
\end{align*}
In fact, for any value $q$,  $\sum_f \sum_{\x_f} x_f^{(b)} \pi_{f,q}(\x_f)= C $ for each $b$, $\sum_{\x_f}\!\pi_{f,q}(\x_f)\!=\!1 $ for each $f$, and $\pi_{f,q}(\x_f)\!\ge\!0$ for each $f$ and $\x_f$. The same relations are also satisfied passing to the limit when $q$ converges to $0$, then~$\{\pi_{f,0^+}(\x_f)\}$ is a feasible solution. Finally,
\begin{align*}
&\frac{\partial L(\boldsymbol \alpha, \boldsymbol \chi,  \boldsymbol \zeta)}{\partial \alpha_f(\x_f)}  \biggr|_{\substack{\boldsymbol \alpha =\boldsymbol \alpha^*  \boldsymbol \chi = \boldsymbol \chi^* \\ \boldsymbol \zeta = \boldsymbol \zeta^*}} =\\
&\quad = -G_f(\x_f) + \sum_{b=1}^B \gamma_b x_f^{(b)} + \max_{\x'_f \in \{0,1\}^B} \phi_f(\x'_f) \\
&\quad =  -\phi(\x_f) + \max_{\x'_f \in \{0,1\}^B} \phi_f(\x'_f)\\
	&\quad  \begin{cases}
		= 0 & \textrm{ if $\x_f$ is stochastically stable,}\\
		> 0 & \textrm{ otherwise.}
	\end{cases}
\end{align*}
Let $\mathcal S_f \subset \{0,1\}^B$ denote the set of stochastically stable states for file $f$. It follows that
\begin{align*}
&\nabla  L (\boldsymbol \alpha^*,  \boldsymbol \chi^*, \boldsymbol \zeta^*)^\intercal   \left(\boldsymbol \alpha - \boldsymbol \alpha^*  \right) = \\
& = \sum_{f=1}^{F} \sum_{\x_f \in \{0,1\}^B} \frac{\partial L(\boldsymbol \alpha, \boldsymbol \chi,  \boldsymbol \zeta)}{\partial \alpha_f(\x_f)}  \biggr|_{\substack{\boldsymbol \alpha =\boldsymbol \alpha^* \\ \boldsymbol \chi = \boldsymbol \chi^* \\ \boldsymbol \zeta = \boldsymbol \zeta^*}}\times (\alpha_f(\x_f) - \alpha^*_f(\x_f) )\\
& = \sum_{f=1}^{F} \sum_{\x_f \in \mathcal S_f} 0 \times (\alpha_f(\x_f) - \pi_{f,0^+}(\x_f) ) \\
& +  \sum_{f=1}^{F} \sum_{\x_f \notin \mathcal S_f} \frac{\partial L(\boldsymbol \alpha, \boldsymbol \chi,  \boldsymbol \zeta)}{\partial \alpha_f(\x_f)}  \biggr|_{\substack{\boldsymbol \alpha =\boldsymbol \alpha^* \\ \boldsymbol \chi = \boldsymbol \chi^* \\ \boldsymbol \zeta = \boldsymbol \zeta^*}} \times (\alpha_f(\x_f) - 0 )\\
		& \ge 0.
\end{align*}
\end{proof}

\end{document}